\title{Optimal Scheduling for Linear-Rate Multi-Mode Systems} 
\author{Dominik Wojtczak}
\institute{University of Liverpool, Liverpool, UK \\
\email{d.wojtczak@liverpool.ac.uk}}
\renewcommand{\dot}[1]{%
  {\mathop{\kern\z@#1}\limits^{\vbox to-1.4\ex@{\kern-\tw@\ex@
   \hbox{\normalfont\large .}\vss}}}}
\newcommand{\tmin}{t_{\text{min}}}
\newcommand{\amax}{a_{\text{max}}}
\newcommand{\andd}{\text{ \scriptsize{\&} }}
\newcommand{\cmin}{c_{\text{min}}}
\newcommand{\cavg}{\mu_\text{avg}}
\newcommand{\cpeak}{\mu_\text{peak}}
\newcommand{\controller}{controller\xspace}
\newcommand{\controllers}{controllers\xspace}
\newcommand{\minsize}{\text{min-size}\xspace}
\newcommand{\FF}{F}
\newcommand{\var}[1]{\bx_{#1}}
\newcommand{\pspace}{{\sc PSpace}\xspace}
\DeclareMathOperator{\sgn}{sgn}
\newcommand{\feasible}{feasible\xspace}
\newcommand{\afeasible}{a feasible\xspace}
\newcommand{\upp}{\point{u}}
\newcommand{\low}{\point{l}}
\newcommand{\good}{implementable\xspace}
\newcommand{\agood}{an implementable\xspace}
\newcommand{\safe}{good\xspace}
\newcommand{\asafe}{a good\xspace}
\DeclareMathOperator{\peakcost}{PeakCost}
\DeclareMathOperator{\avgcost}{AvgCost}
\DeclareMathOperator{\optavgcost}{OptAvgCost}
\newcommand{\norm}[1]{\|#1\|}
\mathchardef\breakingcomma\mathcode`\,
\newcommand{\point}[1]{{\overline{#1}}}
\DeclareMathOperator*{\diag}{diag}
\newcommand{\px}{\point{x}}
\newcommand{\py}{\point{y}}
\newcommand{\bxdot}{\dot{\mathbf{x}}}
\newcommand{\bx}{\mathbf{x}}
\newcommand{\vbxdot}{\dot{\point{\mathbf{x}}}}
\newcommand{\vbx}{\point{\mathbf{x}}}
\newcommand{\vv}{\vec{v}}
\newcommand{\vf}{\vec{f}}
\newcommand{\vg}{\vec{g}}
\newcommand{\cmax}{{c_{\textrm{max}}}}
\newcommand{\suma}[2]{\Sigma_{#1}^{#2}}
\tikzstyle{nloc}=[draw,circle,minimum size=4em,inner sep=0em]
\tikzstyle{mloc}=[draw,circle,minimum size=2em,inner sep=0em]
\tikzstyle{trans}=[-latex, rounded corners]
\newcommand{\set}[1]{\left\{ #1 \right\}}
\newcommand{\seq}[1]{\langle #1 \rangle}
\newcommand{\Rzero}{{\mathbb R}_{\geq 0}}
\newcommand{\Rplus}{{\mathbb R}_{> 0}}
\newcommand{\Nat}{\mathbb N}
\newcommand{\Real}{\mathbb R}
\newcommand{\Int}{\mathbb{Z}}
\newcommand{\Aa}{\mathcal{A}}
\newcommand{\Hh}{\mathcal{H}}
\newcommand{\dg}{$^\circ$C\xspace}
\def\rmdef{\stackrel{\mbox{\rm {\tiny def}}}{=}} %
\begin{document}

\maketitle
\thispagestyle{empty}
\pagestyle{plain}

\allowdisplaybreaks

\begin{abstract}
Linear-Rate Multi-Mode Systems is a model that can be seen both
as a subclass of switched linear systems with imposed global safety constraints
and as hybrid automata with no guards on transitions.
We study the existence and design of a controller for this model that keeps the
state of the system within a given safe set for the whole time.
A sufficient and necessary condition is given for such a controller to exist as well as
an algorithm that finds one in polynomial time. We further generalise the model
by adding costs on modes and present an algorithm that constructs a safe
controller which minimises the peak cost, the average-cost or any cost expressed
as a weighted sum of these two.
Finally, we present numerical simulation results based on our implementation of these
algorithms.
\end{abstract}

 \section{Introduction}
 Optimisation of electricity usage is an increasingly important issue because of
the growing energy prices and environmental concerns. In order to make the whole
system more efficient, not only the average electricity consumption should be
minimised but also its peak demand. The energy produced during the peak times, 
typically occurring in the afternoon due to the heaters or air-conditioning units
being switched on at the same time after people come back from work,
is not only more expensive because the number of consumers outweighs
the suppliers, but also the peaking power plants that 
provide the supply at that time are a lot less efficient. Therefore, the typical
formula that is used for charging companies for electricity is a weighted
average of its peak and average electricity demand \cite{PECO,AS07new}. 
Optimisation of the usage pattern of heating, ventilation and air-conditioning
units (HVAC) not only can save electricity but also contribute to their longer
lifespan, because they do not have to be used just as often.

In~\cite{NBMJ11} Nghiem et al.\ considered a model of an organisation
consisting of a number of decoupled zones whose temperature have to remain within
a specified comfort temperature interval. Each zone has a heater with a number
of possible output settings, but the controller can pick only one of them. That is,
the heater can either be on in that one setting or it has to be off otherwise. 
A further restriction is that only some fixed number of heaters can be on at any
time. The temperature evolution in each zone is governed by a linear
differential equation whose parameters depend on the physical characteristics of the zone, the
outside temperature, the heater's picked setting and whether it is on or
off. The aim is to find a safe controller, i.e. a sequence of time points
at which to switch the heaters on or off, in order for the temperature in each
zone to remain in its comfort interval which is given as the input.
In the end, it was shown that a sufficient condition for such a controller to exist is
whether a simple inequality holds.
This fact can be used to minimise the peak number of heaters used at the same time,
but if heaters can have different costs, then this may not correspond to 
minimising the peak energy cost.  

We strictly generalise the model in~\cite{NBMJ11} and define linear-rate Multi-Mode
Systems (MMS).
The evolution of our system is the same, specifically it consists of a number of
zones, which we will call variables, whose evolution do not directly
influence each other.
However, we do not assume that all HVAC units in the zones are heaters, so the
system can cope with a situation when cooling is required during the day and
heating during the night. Moreover, rather than having all possible combinations
of settings allowed, our systems have a list of allowable joint settings for all
the zones instead; we will call each such joint setting a {\em mode}.
This allows to model specific behaviours, for instance, heat pumps, i.e. when
the heat moves from one zone into another, and central heating that can only
heat all the zones at the same time.
Finally, we will be looking for the actual minimum peak cost without
restricting ourselves to just one setting per heater nor the number of heaters
being switched on at the same time, while keeping the running time polynomial in the number of modes.
We also show how to find the minimum average-cost schedule and finally how to
minimise the energy bill expressed as a weighted sum of the peak and
the average energy consumption.

\vspace{1em}
\noindent {\bf Related work}.
Apart from generalising the model in \cite{NBMJ11}, MMSs can be seen both as
switched linear systems (see, e.g. \cite{GSM01,GS05}) with imposed global
safety constraints or as hybrid automata (\cite{springerlink:10.1007/3-540-57318-6_30,Hen96}) with no guards on
transitions.
The analysis of switched linear systems typically focuses on several forms of
stabilisation, e.g. whether the system can be steered into a given stable region
which the system will never leave again.
However, all these analyses are done in the limit and do not impose any
constraints on the state of the system before it reaches the safe region.
Such an analysis may suffice for systems where the constraints are soft, e.g.
nothing serious will happen if the  temperature in a room will briefly be too
high or too low.
However, it may not be enough when studying safety-critical systems, e.g. when
cooling nuclear rectors. Each zone in an MMS is given a safe value interval
in which the zone has to be at all times.
This causes an interesting behaviour, because even if the system stabilises
while staying forever in any single mode, 
these stable points may be all unsafe and therefore the
controller has to constantly switch between different modes to keep the MMS
within the safety set. For instance, a heater in a room has to constantly switch
itself on and off as otherwise the temperature will become either too high or
too low.
On the other hand, even the basic questions are undecidable for hybrid automata
(see, e.g. \cite{HKPV98}) and therefore MMSs constitute its natural subclass 
with decidable and even tractable safety analysis.

In \cite{ATW12b} we recently studied a different incomparable class of
constant-rate Multi-Mode Systems where in each mode the state of a zone
changes with a constant-rate as opposed to being govern by a linear differential
equation as in linear-rate MMS. Specifically, in ever mode $m \in M$
the value of each variable $\bx_i$ after time $t$ increases by $c^m_i \cdot t$ where \mbox{$c^m_i \in \Real$} is the constant rate of change of $\bx_i$ in mode $m$.
That model was a special case of linear hybrid automata (\cite{springerlink:10.1007/3-540-57318-6_30}), which has
constant-rate dynamics and linear functions as guards on transitions. We showed a polynomial-time algorithm
for both safe controllability and safe reachability questions, as well as finding
optimal safe controllers in the generalised model where each mode has an associated cost per time unit.

There are many other approaches to reduce energy consumption and peak usage in buildings.
One particularly popular one is model predictive control \cite{camacho1999model} (MPC). 
In \cite{OUPAM10} stochastic MPC was used to minimize
building's energy consumption, while in \cite{YBHCBH10} the peak electricity demand reduction was considered.
The drawback of using MPC in our setting is its high computational complexity and the fact it cannot guarantee optimality.   

\vspace{1em}
\noindent {\bf Results}.
The key contribution of the paper is an algorithm for constructing a safe
controller for MMSs with any starting point in the interior of the safety set
that we present in Section \ref{sec:safe}.
Unlike in \cite{NBMJ11}, we not only show a sufficient, but also necessary, condition for such a safe
controller to exist. The condition is a system of linear inequalities that can
be solved using polynomial-time algorithms for linear programming (see, e.g.
\cite{Schr86}) and because that system does not depend on the starting state, we
show that either all points in the interior of the safe set have a safe
controller or none of them has one.
Furthermore, we show that if there is a safe controller then there is a periodic
one with the {\em minimum dwell time}, i.e. the smallest amount of time between two mode switches,
being of polynomial-size.
Such a minimum dwell time may be still too low for practical purposes. However, 
we prove that the problem of checking whether there is a safe controller with
the minimum dwell time higher than $1$ (or for any other constant) is \pspace-hard. 
This means that any approximation of the largest minimum dwell time 
among all safe controllers is unlikely
to be tractable.

In Section \ref{sec:optimal} we generalise the MMS model by associating cost per time unit
with each mode and looking for a safe schedule that minimises the
long-time average-cost. Similarly as before, if there is at least one safe
controller, then the optimal cost do not depend on the starting point
and there is always a periodic optimal controller. In order to prove that the
controller that we construct has the minimum average-cost it is crucial that 
the condition found in Section \ref{sec:safe} is both
sufficient and necessary.
Furthermore, in order to check whether there exists a safe controller with a
peak cost at most $p$, it suffices to check safe controllability for the set of
all modes whose cost do not exceed $p$. This allows us to find the
minimum peak cost using binary search on $p$.

We show that all these periodic safe (and optimal) controllers can be
constructed in time polynomial in the number of modes. However, if one considers the set
of modes to be given implicitly as in \cite{NBMJ11} where each zone has a
certain number of settings and all their possible combinations are allowed  
then the number of modes becomes exponential in the size of the input.
We try to cope with the problem by performing a bottom-up
binary search in order to avoid analysing large sets of modes 
and use other techniques to keep the running time
manageable in practice.

Finally, we show how to find the minimum total cost calculated as a weighted
sum of the peak cost and the average cost. 
The challenging part is that peak cost generally increases when
the set of modes is expanded while the 
average-cost decreases. Therefore, the weighted cost may not be monotone in the
size of set of modes and so binary search may not work and finding its minimum may require
checking many possible subsets of the set of modes.
However, we show a technique how one can narrow down this search significantly to
make it practical even for a large sets of modes.

In the end, we conclude and point out some possible future work in Section \ref{sec:conclusion}.

 \section{Linear Multi-Mode Systems}
 \label{sec:model}

Let us start by setting the notation.
We write $\Nat$ for the set of natural
numbers, $\Nat_{>0}$ for the set of
positive integers and $\Int$ for the set of integers. For a set $X$, let $|X|$ denote the number of elements in $X$. 
Also, we write $\Real$, $\Rzero$, and $\Rplus$ for the sets of
all, non-negative and strictly positive real numbers,
respectively. States of our system will be points in the Euclidean space
$\Real^n$ equipped with the standard {\em Euclidean norm} $\norm{\cdot}$.
By $\px, \py$ we denote points in this state space, by $\vf, \vv$ vectors,
while $\px(i)$ and $\vf(i)$ will denote the $i$-th coordinate of point $\px$
and vector $\vf$, respectively.
For $\Diamond\in \{\leq,<,\geq,>\}$, we write $\px\, \Diamond\, \py$ if
$\px(i)\, \Diamond\, \py(i)$ for all $i$.
For a $n$-dimensional vector
$\vv$ by $\diag(\vv)$ we denote a $n \times n$ dimensional matrix whose diagonal
is $\vv$ and the rest of the entries are $0$. We can now formally define our
model.

\begin{definition}
  \label{def:MMS}
  A linear-rate multi-mode system (MMS) is a tuple $\Hh = (M, N, A, B)$
  where $M$ is a finite nonempty set of {\em modes}, $N$ is the number of
  continuous-time variables in the system, and $A : M \to \Rplus^N, B: M \to
  \Real^N$ give for each mode
  the coefficients of the linear differential equation 
  that govern the dynamics of the system. 
\end{definition}

In all further computational complexity considerations, we assume that all real numbers are rational and
represented in the standard way by writing down the numerator and denominator
in binary. Throughout the paper we will write $a^m_i$ and $b^m_i$ as a shorthand 
for $A(m)(i)$ and $B(m)(i)$, respectively. 

A {\em \controller} of an MMS specifies a timed sequence of mode switches.
Formally, a {\em \controller} is defined as a finite or infinite sequences of
{\em timed actions}, where a timed action $(m, t) \in M \times \Rplus$ is a
tuple consisting of a mode and a time delay.
We say that an infinite \controller  $\seq{(m_1, t_1), (m_2, t_2), \ldots}$
is {\em Zeno} if ${\sum_{k=1}^{\infty} t_k < \infty}$ and  
is {\em periodic} if there exists $l \geq 1$ such that for all $k \geq 1$ we
have $(m_{k}, t_{k}) = (m_{(k \bmod l) + 1}, t_{(k \bmod l) + 1})$.
Zeno \controllers require infinitely many mode-switches within a finite amount of time, and
hence, are physically unrealizable.
However, one can argue that a \controller that switches after $t_k =
{1}/{k}$ amount of time during the $k$-th timed action is also infeasible, because it requires 
the switches to occur infinitely frequently in the limit.
Therefore, we will call a \controller{} {\em \feasible} if its minimum dwell time,
i.e. the smallest amount of time between two mode switches, is positive.
We will relax this assumption and allow for the modes that are not used at all by a feasible controller to 
occur in its sequence of time actions with timed delays equal to $0$, but
we still require any \feasible controller to be non-Zeno. 
For a 
 \controller $\sigma = \seq{(m_1, t_1), (m_2, t_2),
  \ldots}$, we write $T_k(\sigma) \rmdef \sum_{i=1}^k t_i$ for the total time
elapsed up to step $k$ of the \controller $\sigma$,
$T^m_k(\sigma) \rmdef \sum_{i \leq k : m_{i} = m} t_i$ for the total time
spent in mode $m$ up to step $k$, and finally $\tmin(\sigma) = \inf_{\{k\,:\ t_k > 0\}} t_k$ defines the minimum dwell time of
$\sigma$.
For any non-Zeno \controller $\sigma$ we have that $\lim_{k \to \infty} T_k(\sigma) =\infty$
and for any \feasible \controller $\sigma$ we also have
$\tmin(\sigma) > 0$. 
Finally, for any $t \geq 0$ let
$\sigma(t)$ denote the mode the controller $\sigma$ directs
the system to be in at the time instance $t$. Formally, we have $\sigma(t) = m_k$
where $k = \min\{i : t \leq T_i (\sigma)\}$.

The state of MMS $\Hh$ initialized at a starting point $\px_0$ under
control $\sigma$ is a $N$-tuple of continuous-time {\em variables} $\vbx(t)
= (\bx_1(t), \ldots, \bx_{N}(t))$ such that $\vbx(0) = \px_0$ and
$\vbxdot(t) = B(\sigma(t)) - \diag(A(\sigma(t)))\vbx(t)$ holds at any time $t\in\Rzero$.
It can be seen that if $\Hh$ is in mode $m$ during the entire time interval
$[t_0, t_0 + t]$ then the following holds $\bx_i(t_0 + t) =
{b^m_i}/{a^m_i} + (\bx_i(t_0) - {b^m_i}/{a^m_i}) e^{-a^m_i t}$. Notice that this
expression is monotonic in $t$ and converges to ${b^m_i}/{a^m_i}$, because based
on the definition of MMS we have $a^m_i > 0$ for all $m$ and $i$. 

Given a set $S \subseteq \Real^N$ of safe states, we say that a controller
$\sigma$ is $S$-safe for MMS $\Hh$ initialised at $\px_0$ if for all $t \geq 0$
we have $\bx(t) \in S$.
We sometimes say safe instead of {$S$-safe} if $S$ is clear from the context.
In this paper we restrict ourselves to safe sets being hyperrectangles, which
can be specified by giving lower and upper bound value for each variable in the system.
This assumption implies that controller $\sigma$ is $S$-safe iff 
$\bx(t) \in S$ for all $t \in \{T_k(\sigma) : k\geq 0\}$, because 
each $\bx_i(t)$ is monotonic when $\Hh$ remains in the same mode and so
if system is $S$-safe at two time points, the system is $S$-safe in between these two time points as well.  
This fact is crucial to the further analysis and allows us to only focus on 
$S$-safety at the mode switching time points of the controller.
Formally, to specify any hyperrectangle $S$, it suffices to give  
two points $\low, \upp \in \Real^N$, which define the region as follows $S = \{\px : \low \leq \px
\leq \upp\}$. 
The fundamental decision problem for MMS that we solve in this paper is the
following.
\begin{definition}[Safe Controllability]
Decide whether there exists \afeasible
$S$-safe \controller for a given MMS $\Hh$, a hyperrectangular safe set
$S$ given by two points $\low$ and $\upp$ and an initial point $\px_0 \in S$.
\end{definition}

The fact that $a^m_i > 0$ for all $m$ and $i$ make the system {\em stable} in any mode, 
i.e. if the system stays in any fixed mode forever, it will converge to an
equilibrium point.
However, none of these equilibrium points may be $S$-safe and as a result
the controller may need to switch between modes in order to
be $S$-safe. 
We present an algorithm to solve the safe controllability problem in Section
\ref{sec:safe} and later, in Section \ref{sec:optimal}, we generalise the model
to MMS with costs associated with modes and
the aim being finding \afeasible $S$-safe controller with the minimum
average-cost, peak cost, or some weighted sum of
these. As the following example shows, safe controllability can depend on the
starting point if it lies on the boundary of the safe set. 
We will not analyse this special case and assume instead
that the starting point belongs to the interior of the safe set.

It should be noted that the definition of MMS allows for an arbitrary switching between modes.
Restricting the possible order the modes can be used in a timed sequence will be the subject of Corollary \ref{cor:cor}.

\begin{example}
Consider an apartment with two rooms and one heater.
The heater can only heat one room at a time.
When it is off, the room temperature converges to the outside
temperature of 12\dg, while if it is constantly on, the temperature of the room
converges to 30\dg. We assume the comfort temperature to be
between 18\dg and 22\dg.
The table below shows the coefficients $b^m_i$ for all modes $m$ and rooms $i$,
while all the $a^m_i$-s are assumed to be equal to $1$.
Intuitively, when heating room 1 and 2 half of the time each, the temperature in each
room should oscillate around (30\dg+12\dg)/2 = 21\dg and never leave
the comfort zone assuming the switching occurs frequently enough.
We will prove this intuition formally in Section \ref{sec:safe}. 
Therefore, as long as the temperature in one of the rooms is above
18\dg at the very beginning, a safe controller exists. However, if the
temperature in both rooms start at 18\dg (a state which is safe), a safe
controller does not exists, because in every mode the temperature has to drop
in at least one of the rooms and so the state becomes unsafe under any control.
\vskip0.1in
  \begin{center}
    \begin{tabular}{|c|c|c|c|}
      \hline
       Modes  & $m_1$ & $m_2$ & $m_3$ \\
       \hline
      $b^m_1$ (Room 1) & 12  & 30 & 12 \\
      \hline
      $b^m_2$ (Room 2) & 12 & 12 & 30 \\
      \hline
    \end{tabular}
  \end{center}

\end{example}
\vskip0.1in

 \section{Safe Schedulability} 
 \label{sec:safe}
Let us fix in this section a linear-rate MMS $\Hh = (M,N,A,B)$ and a safe set
$S$ given by two points $\low, \upp \in \Real^N$, such that $\low < \upp$ and $S
= \{\px:\low\leq \px \leq \upp\}$. We call any vector $\vf \in \Rzero^M$ such that $\sum_{m\in M}
\vf(m) = 1$ a {\em frequency vector}. 
Also, let us define $\FF_i(\vf, y) := \sum_{m
\in M} \vf(m) (b^m_i - a^m_i y)$.
Notice that for a fixed $i$ and frequency vector $\vf$, function
$\FF_i(\vf,y)$ is continuous and strictly decreasing in $y$. Moreover, 
$\FF_i(\alpha \vf + \beta \vg,y) = \alpha\FF_i(\vf,y) + \beta\FF_i(\vg,y)$.  For
a frequency vector $\vf$, variable $\bx_i$ is called {\em critical} if $\FF_i(\vf, \low_i) = 0$
or $\FF_i(\vf, \upp_i) = 0$ holds.

\begin{definition}
A frequency vector $\vf$ is {\em \safe} if
for every variable $\bx_i$ the following conditions hold (I) $\FF_i(\vf, \low)
\geq 0$, and (II) $\FF_i(\vf, \upp) \leq 0$.
A frequency vector $\vf$ is {\em \good} if
it is \safe and for every variable $\bx_i$ we additionally have
(III) if $\FF_i(\vf, \low) = 0$ then
$\vf(m) = 0$ for every $m \in M$ such that
${b^m_i}/{a^m_i} \neq \low_i$, and (IV) if $\FF_i(\vf,
\upp) = 0$ then 
$\vf(m) = 0$ for every $m \in M$ such that
${b^m_i}/{a^m_i} \neq \upp_i$. 
\end{definition}

\begin{theorem}
\label{thm:safe-to-freq}
If there exists \afeasible $S$-safe controller then there exists \agood frequency
vector.
\end{theorem}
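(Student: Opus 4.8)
The plan is to distil \agood frequency vector out of the given \feasible $S$-safe controller by looking at the long-run fraction of time it spends in each mode. So fix such a controller $\sigma = \seq{(m_1,t_1),(m_2,t_2),\ldots}$, for some starting point $\px_0\in S$, and for every step $k$ set $\vf_k(m) := T^m_k(\sigma)/T_k(\sigma)$, which is a frequency vector. Since $\sigma$ is non-Zeno, $T_k(\sigma)\to\infty$, and since the set of frequency vectors is the compact standard simplex in $\Rzero^M$, some subsequence $(\vf_{k_j})_j$ converges to a frequency vector $\vf^{*}$; I claim that $\vf^{*}$ is \good.

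First I would verify that $\vf^{*}$ is \safe. Fix a variable $\bx_i$. The rate of change of $\bx_i$ at time $s$ is $b^{\sigma(s)}_i - a^{\sigma(s)}_i\bx_i(s)$, so integrating over $[0,T_k(\sigma)]$, using both that $\sigma$ spends exactly $T^m_k(\sigma)$ time units in each mode $m$ over this interval and the $S$-safety bounds $\low_i\le\bx_i(s)\le\upp_i$, one obtains
\[
  T_k(\sigma)\,\FF_i(\vf_k,\upp_i)\ \le\ \bx_i(T_k(\sigma)) - \bx_i(0)\ \le\ T_k(\sigma)\,\FF_i(\vf_k,\low_i).
\]
Dividing by $T_k(\sigma)$, noting that $\bx_i(T_k(\sigma)) - \bx_i(0)$ stays in the bounded interval $[\low_i-\upp_i,\ \upp_i-\low_i]$ while $T_k(\sigma)\to\infty$, and using that $\FF_i$ is linear, hence continuous, in its first argument, the limit along $(k_j)_j$ gives $\FF_i(\vf^{*},\upp_i)\le 0\le\FF_i(\vf^{*},\low_i)$, i.e.\ conditions (I) and (II).

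Next I would verify conditions (III) and (IV), and this is the only place where \feasibility, rather than mere non-Zenoness, is used. Rewriting the right inequality above as the identity $T_k(\sigma)\FF_i(\vf_k,\low_i) - (\bx_i(T_k(\sigma)) - \bx_i(0)) = \int_0^{T_k(\sigma)} a^{\sigma(s)}_i(\bx_i(s)-\low_i)\,ds \ge 0$, and assuming $\FF_i(\vf^{*},\low_i)=0$, division by $T_k(\sigma)$ and passage to the limit along $(k_j)_j$ shows that this average tends to $0$; since the integrand is nonnegative and decomposes as a sum over modes, the average contribution of each individual mode tends to $0$ too. Now take any mode $m_0$ with $b^{m_0}_i/a^{m_0}_i > \low_i$: each time interval $\sigma$ spends in $m_0$ has length $t\ge\tmin(\sigma)>0$ and starts from a value $\ge\low_i$, so by the closed-form solution for $\bx_i$ its contribution to $\int(\bx_i(s)-\low_i)\,ds$ is at least $(b^{m_0}_i/a^{m_0}_i-\low_i)\,g(t)$, where $g(\tau):=\tau-(1-e^{-a^{m_0}_i\tau})/a^{m_0}_i$; since $g$ is convex with $g(0)=0$ we have $g(t)\ge (t/\tmin(\sigma))\,g(\tmin(\sigma))$, so summing over all such intervals the contribution of $m_0$ up to step $k$ is at least $c_{m_0}\,T^{m_0}_k(\sigma)$ for a constant $c_{m_0}>0$. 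Hence the average contribution of $m_0$ is at least $a^{m_0}_i c_{m_0}\,\vf_k(m_0)\to a^{m_0}_i c_{m_0}\,\vf^{*}(m_0)$ along $(k_j)_j$, and as this average tends to $0$ we must have $\vf^{*}(m_0)=0$. Thus the support of $\vf^{*}$ contains only modes with $b^m_i/a^m_i\le\low_i$; but then $\FF_i(\vf^{*},\low_i)=\sum_m\vf^{*}(m)\,a^m_i\,(b^m_i/a^m_i-\low_i)$ is a sum of nonpositive terms equal to $0$, so $b^m_i/a^m_i=\low_i$ for every $m$ in the support --- which is exactly condition (III). Condition (IV) follows by the mirror-image argument around $\upp_i$, completing the proof that $\vf^{*}$ is \good.

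The main obstacle is the argument for (III)/(IV): showing that when a critical constraint is tight, the long-run frequency of every ``wrong'' mode must be $0$. This genuinely needs $\tmin(\sigma)>0$ --- for a controller that is non-Zeno but has zero minimum dwell time one could keep returning to such a mode for ever-shorter intervals, keeping $\bx_i$ arbitrarily close to the boundary for almost all the time, and no contradiction would arise. The technical heart is the convexity estimate $g(t)\ge (t/\tmin(\sigma))g(\tmin(\sigma))$, which upgrades ``$m_0$ has positive frequency'' into a lower bound on the relevant time-average that is bounded away from $0$, contradicting its convergence to $0$.
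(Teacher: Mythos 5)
Your proof is correct, and it follows the paper's overall architecture --- take the empirical frequency vectors $\vf_k(m)=T^m_k(\sigma)/T_k(\sigma)$, extract a convergent subsequence by compactness, and show the limit is \good via integral estimates on the drift $b^{\sigma(s)}_i-a^{\sigma(s)}_i\bx_i(s)$ --- but the technical heart of the hard cases is genuinely different. For conditions (I)/(II) the two arguments are essentially the same estimate; yours is packaged as a two-sided sandwich $T_k\FF_i(\vf_k,\upp_i)\le \bx_i(T_k)-\bx_i(0)\le T_k\FF_i(\vf_k,\low_i)$ followed by a direct passage to the limit, whereas the paper runs a contradiction with explicit $\varepsilon$-bookkeeping; both correctly use only non-Zenoness here. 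For (III)/(IV) the paper's key tool is Lemma~\ref{lem:half-time} (on at least half of the duration of every timed action in a ``wrong'' mode, $\bx_i\ge\low_i+\gamma$ for an explicit $\gamma$ depending on $\tmin(\sigma)$), proved in the appendix by a case analysis on the sign of the drift; this yields a strictly negative extra drift term and hence $\bx_i(T_{j_k})\to-\infty$. You instead observe that $\FF_i(\vf^*,\low_i)=0$ forces the time-average of the nonnegative quantity $a^{\sigma(s)}_i(\bx_i(s)-\low_i)$ to vanish, lower-bound the contribution of each mode with $b^m_i/a^m_i>\low_i$ via the closed-form solution and the convexity of $g(\tau)=\tau-(1-e^{-a\tau})/a$ (which is where $\tmin(\sigma)>0$ enters, exactly as in the paper), and then dispose of the modes with $b^m_i/a^m_i<\low_i$ by the purely algebraic observation that a vanishing sum of nonpositive terms has all terms zero. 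Your route avoids the appendix lemma and its case analysis and is arguably cleaner; the paper's route keeps all four cases within one uniform ``drift to $-\infty$'' contradiction scheme. Two cosmetic caveats: zero-length timed actions (which the paper explicitly permits for unused modes) contribute nothing to either side of your per-interval bound, so they do not break the estimate but deserve a word; and you should note, as the paper does, that the decomposition of $[0,T_k]$ into complete timed actions is what lets you write the integral as a sum over modes of whole intervals.
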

\begin{proof}
  Denote the \feasible $S$-safe controller by $\sigma$.
  Let $f^{(m)}_k = T^m_k(\sigma) / T_k(\sigma)$ be the fraction of the time
  spent by $\sigma$ in mode $m$ up to its $k$-th timed action; 
  note that ${f^{(m)}_k \in [0,1]}$, and 
  ${\sum_{m\in M} f^{(m)}_k = 1}$ for all $k$.
  Let us look at the sequence of vectors
  $\seq{\vf_k \in [0,1]^M}_{k=1}^{\infty}$ where we set $\vf_k(m) =
  f^{(m)}_k$.
  Since this sequence is bounded, by the Bolzano-Weierstrass theorem, there
  exists an increasing integer sequence $j_1, j_2, \ldots$ such that
  $\lim_{k \to \infty} \vf_{j_k}$ exists and let us denote this limit by
  $\vf$. We prove by contradiction that $\vf$ is \agood frequency vector.
  
  First, $\vf$ is a frequency vector as a limit of a
  sequence of frequency vectors. So if it was not \agood one then for some
  variable $\bx_i$ at least one of the following would hold 
  (I) $\FF_i(\vf, \low_i) < 0$, or (II) $\FF_i(\vf, \upp_i) > 0$, or (III) $\FF_i(\vf, \low_i) = 0$ and the set
  $M' := \{m \in M : {\vf(m)>0} \andd {b^m_i}/{a^m_i} \neq
  \low_i\}$ is nonempty, or (IV) $\FF_i(\vf, \upp_i) = 0$ and the set
  $\{m \in M : \vf(m) >0 \andd {b^m_i}/{a^m_i} \neq \upp_i\}$ is
  nonempty. We will consider only cases (I) and (III) as the other two are
  symmetric and their proofs are essentially the same.
  
  Let us first look at case (I). Denote $c:= \FF_i(\vf, \low_i) < 0$. Let
  $\chi_m(t)$ be equal to $1$ if $\sigma(t) = m$ and let it be $0$ otherwise. 
  Notice that $\bx_i(T_k)$,
  the value of the variable $\bx_i$ after the $k$-th timed action of $\sigma$, is equal to 
  $\bx_i(0) + \int_0^{T_k} \bxdot_i(t) dt = \bx_i(0) + \sum_{m
  \in M} \int_0^{T_k} (b^m_i - a^m_i \bx_i(t)) \chi_m(t) dt \leq \upp_i +
  \sum_{m \in M} T^m_k (b^m_i - a^m_i \low_i)$, because if the system is $S$-safe, then for every mode $m$ we have
  $b^m_i - a^m_i \var{i} \leq b^m_i - a^m_i \low_i$. 
  From the definition of $\vf$, for any $\varepsilon > 0$ we can pick
  $K$ such that for all $k > K$ and $m \in M$ we have $|\vf_{j_k}(m) -
  \vf(m)| < \varepsilon$.
  So 
  \begin{align*}
  &\var{i}(T_{j_k}) \leq
  \upp_i + \sum_{m \in M} T^m_{j_k} (b^m_i -
  a^m_i \low_i) 
  = 
  \upp_i + T_{j_k} \sum_{m \in M} \vf_{j_k}(m) (b^m_i -
  a^m_i \low_i)  \\
  &= 
  \upp_i + T_{j_k} \sum_{m\in M} \big(\vf(m) + (\vf_{j_k}(m) -
    \vf(m))\big)
  (b^m_i -
  a^m_i \low_i) \\
  &\leq 
  \upp_i + T_{j_k} (\sum_{m\in M} \vf(m) (b^m_i -
  a^m_i \low_i) + \varepsilon |M| \cmax))  
  =
  \upp_i + T_{j_k} (c + \varepsilon |M| \cmax) 
  \end{align*}
  where $\cmax := \max_{m \in M} |b^m_i - a^m_i \low_i| \geq |\FF_i(\vf,\low_i)|
  > 0$.
  
  If we now set $\varepsilon$ to be $-c/(2|M|\cmax)$, which is $>0$, then
  $\var{i}(T_{j_k}) \leq \upp_i + \frac{1}{2} T_{j_k} c$, and so $\lim_{k
  \to \infty} \var{i}(T_{j_k}) = -\infty$, because $\sigma$ is non-Zeno and $c <
  0$.
  This is a contradiction with the assumption that $\sigma$ is
  $S$-safe, i.e. $\bx_i(t) \geq \low_i$ for all $t\geq 0$.
  
  Now let us move on to case (III). Let $\amax := \max_{m\in M'} {|a^m_i|}$,
  $\cmin := \min_{m\in M'}{|b^m_i - a^m_i \low_i|}$ and $\tmin :=
  \tmin(\sigma)$.
  Of course $\cmin > 0$, because ${b^m_i}/{a^m_i} \neq \low_i$ for $m \in
  M'$ and $\tmin > 0$, because $\sigma$ is \feasible. Let
  $\gamma := \frac{\tmin\cmin}{2 + \tmin\amax}$, which is $>0$. 
\begin{lemma}
\label{lem:half-time}
For at least half of the time duration of every timed action of $\sigma$ which uses mode~$\in M'$,  $\bx_i(t) \geq \low_i + \gamma$
 holds.
\end{lemma}
The proof of Lemma \ref{lem:half-time} can be found in the appendix. We can now proceed similarly as in case (I).
  We have that $\bx_i(T_k)$ is equal to 
  $\bx_i(0) + \int_0^{T_k} \bxdot_i(t) dt = \bx_i(0) +
  \sum_{m \in M} \int_0^{T_k} (b^m_i - a^m_i \bx_i(t)) \chi_m(t) dt \leq \upp_i +
  \sum_{m \in M} T^m_k (b^m_i - a^m_i \low_i) + \sum_{m \in M'}
  -\frac{1}{2}T^m_k a^m_i \gamma $, because using Lemma~\ref{lem:half-time} we
  know that for at least half of the time spent in any mode $m \in M'$ we have $b^m_i -
  a^m_i \bx_i(t) \leq b^m_i - a^m_i (\low_i + \gamma)$ and for the other half and
  any other $m \in M\setminus M'$ we have $b^m_i - a^m_i \bx_i(t) \leq b^m_i - a^m_i \low_i$.
  Again, from the definition of $\vf$, for any $\varepsilon > 0$ we can pick
  $K$ such that for all $k > K$ and $m \in M$ we have $|\vf_{j_k}(m) -
  \vf(m)| < \varepsilon$ and so
  \begin{align*}
  &\var{i}(T_{j_k}) \leq
  \upp_i + \sum_{m \in M} T^m_{j_k} (b^m_i -
  a^m_i \low_i) + \sum_{m \in M'} -\frac{1}{2}T^m_{j_k} a^m_i \gamma \\
  &= 
  \upp_i + T_{j_k} \sum_{m \in M} \vf_{j_k}(m) (b^m_i -
  a^m_i \low_i) - \frac{1}{2} T_{j_k} \gamma \sum_{m \in M'} \vf_{j_k}(m) a^m_i 
  \\
  &= 
  \upp_i + T_{j_k} \sum_{m\in M} \big(\vf(m) + (\vf_{j_k}(m) -
    \vf(m))\big)
  (b^m_i -
  a^m_i \low_i)
  - \frac{1}{2} T_{j_k} \gamma \sum_{m \in M'} \big(\vf(m) + (\vf_{j_k}(m) -
    \vf(m))\big) a^m_i
  \\
  &\leq 
  \upp_i + T_{j_k} (\sum_{m\in M} \vf(m) (b^m_i -
  a^m_i \low_i) + \varepsilon |M| \cmax)) 
   - \frac{1}{2} T_{j_k} \gamma \sum_{m \in M'} \big(\vf(m) - \varepsilon)
  \amax
  \\ 
  &\leq
  \upp_i + T_{j_k} (0 + \varepsilon |M| \cmax + \frac{1}{2} \varepsilon \gamma 
  |M'| \amax - \frac{1}{2} \gamma \sum_{m \in M'} \vf(m)).
  \end{align*}
  If we now set $\varepsilon := \frac{1}{2}\gamma \sum_{m \in M'} \vf(m)
  /(2|M| \cmax + \gamma |M'| \amax)$, which is $ > 0$, then we get $\var{i}(T_{j_k}) \leq \upp_i
  - \frac{1}{4} T_{j_k} \gamma \sum_{m \in M'} \vf(m)$, and so $\lim_{k \to
  \infty} \var{i}(T_{j_k}) = -\infty$, because $\gamma
  \sum_{m \in M'} \vf(m) > 0$ and $\sigma$ is non-Zeno.
  This is a contradiction with the assumption that $\sigma$ is
  $S$-safe.

  Similarly we can show that neither case (II) nor case (IV) can hold which
  finishes the proof that $\vf$ is \agood frequency vector.
\qed
\end{proof}  
  
\begin{theorem}
\label{thm:freq-to-safe}
If there exists \agood frequency
vector then there exists a periodic $S$-safe controller 
for any initial state in the interior of the safety set. 
\end{theorem}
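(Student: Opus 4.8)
The plan is to realise $\vf$ by a fast round-robin schedule. Let $\vf$ be the given \good frequency vector, fix an enumeration $M=\{m_1,\dots,m_{|M|}\}$ and a period $\Delta>0$ to be chosen small, and let $\sigma$ be the infinite repetition of the block $\seq{(m_1,\vf(m_1)\Delta),\dots,(m_{|M|},\vf(m_{|M|})\Delta)}$. By construction $\sigma$ is periodic and non-Zeno (each block lasts $\Delta$), and it is \feasible because $\tmin(\sigma)=\Delta\cdot\min\{\vf(m):\vf(m)>0\}>0$ (modes with $\vf(m)=0$ appear with zero delay, which the model permits). Since given $\sigma$ the coordinates of $\vbx$ evolve independently and $S$ is the product of the intervals $[\low_i,\upp_i]$, it suffices to exhibit one $\Delta>0$ for which every $\bx_i(t)$, started at $\px_0(i)$, stays in $[\low_i,\upp_i]$ for all $t\ge0$. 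Write $\FF_i(\vf,y)=\beta_i-\alpha_i y$ with $\alpha_i:=\sum_m\vf(m)a^m_i>0$, $\beta_i:=\sum_m\vf(m)b^m_i$, and $y^*_i:=\beta_i/\alpha_i$; then (I)--(II) say precisely that $y^*_i\in[\low_i,\upp_i]$.

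First I would dispose of the \emph{critical} coordinates. Suppose $\FF_i(\vf,\low_i)=0$, i.e.\ $y^*_i=\low_i$ (the case $\FF_i(\vf,\upp_i)=0$ is symmetric; both are impossible at once since $\FF_i(\vf,\cdot)$ is strictly decreasing). Condition (III) then forces ${b^m_i}/{a^m_i}=\low_i$ for every mode $m$ with $\vf(m)>0$, hence for every mode in which $\sigma$ ever dwells for positive time; in such a mode $\bx_i(t_0+t)=\low_i+(\bx_i(t_0)-\low_i)e^{-a^m_i t}$, so starting from $\px_0(i)>\low_i$ (the start is interior) the value of $\bx_i$ is non-increasing, stays strictly above $\low_i$, and never exceeds $\px_0(i)\le\upp_i$. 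Hence this coordinate is safe for \emph{every} $\Delta>0$.

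The heart is the \emph{non-critical} coordinates, where $\FF_i(\vf,\low_i)>0$ and $\FF_i(\vf,\upp_i)<0$, so $y^*_i\in(\low_i,\upp_i)$; as $\px_0(i)\in(\low_i,\upp_i)$ as well, fix $\eta_i>0$ with $\{y^*_i,\px_0(i)\}\subseteq[\low_i+\eta_i,\upp_i-\eta_i]$. Let $g_\Delta$ be the affine one-block return map sending the value of $\bx_i$ at the start of a block to its value at the end. It is the composition of the mode flows $v\mapsto {b^m_i}/{a^m_i}+(v-{b^m_i}/{a^m_i})e^{-a^m_i\vf(m)\Delta}$, so its linear coefficient is $\prod_m e^{-a^m_i\vf(m)\Delta}=e^{-\alpha_i\Delta}\in(0,1)$ independently of the enumeration; hence $g_\Delta$ has a unique fixed point $v^*_\Delta$, and a first-order expansion of the flows yields $g_\Delta(v)=v+\Delta\,\FF_i(\vf,v)+O(\Delta^2)$ uniformly on compacts, so $\FF_i(\vf,v^*_\Delta)=O(\Delta)$ and therefore $v^*_\Delta\to y^*_i$ as $\Delta\to0$. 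Writing $v_k$ for the value of $\bx_i$ at the start of block $k$, from $v_{k+1}=e^{-\alpha_i\Delta}v_k+(1-e^{-\alpha_i\Delta})v^*_\Delta$ we get $v_k-v^*_\Delta=e^{-(k-1)\alpha_i\Delta}(\px_0(i)-v^*_\Delta)$, so each $v_k$ lies between $\px_0(i)$ and $v^*_\Delta$; thus once $\Delta$ is small enough that $|v^*_\Delta-y^*_i|<\eta_i/2$ we have $v_k\in[\low_i+\eta_i/2,\upp_i-\eta_i/2]$ for all $k$. Within a single block of length $\Delta$ the value of $\bx_i$ moves by at most $C_i\Delta$, where $C_i:=\max_m\max_{y\in\{\low_i-1,\upp_i+1\}}|b^m_i-a^m_iy|$ bounds $|\bxdot_i|$ while $\bx_i\in[\low_i-1,\upp_i+1]$; shrinking $\Delta$ so that $C_i\Delta<\eta_i/2$, at the start of block $k$ we have $\bx_i=v_k\in[\low_i+\eta_i/2,\upp_i-\eta_i/2]$ and a displacement of at most $C_i\Delta<\eta_i/2$ keeps $\bx_i$ inside $[\low_i,\upp_i]\subset[\low_i-1,\upp_i+1]$ throughout the block (which in turn justifies the bound $|\bxdot_i|\le C_i$), so $\bx_i(t)\in[\low_i,\upp_i]$ for all $t\ge0$. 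Choosing $\Delta$ small enough to satisfy these finitely many constraints over all non-critical $i$ completes the proof.

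The step I expect to be the main obstacle is making the averaging rigorous on the \emph{infinite} horizon: a plain averaging estimate only controls $\bx_i$ over a bounded time interval, so the non-critical case must use the exponential contraction factor $e^{-\alpha_i\Delta}$ of the return map to obtain a bound uniform in the block index $k$, and must use the interior assumption to supply the slack $\eta_i$ that absorbs both the $O(\Delta)$ gap between the periodic orbit's return value $v^*_\Delta$ and the averaged equilibrium $y^*_i$ and the within-block excursions of size $O(\Delta)$. The coordinates without such slack are exactly the critical ones, which is why conditions (III)--(IV) are present: they make those coordinates safe for any $\Delta$, handled separately at the start.
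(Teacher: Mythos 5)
Your proof is correct and follows essentially the same route as the paper's: the same round-robin periodic controller with dwell times $\vf(m)\cdot s$, the same separate treatment of critical coordinates via conditions (III)--(IV), and the same analysis of the affine one-period return map with contraction factor $e^{-\alpha_i s}$ whose fixed point tends to the averaged equilibrium $y^*_i$ as the period shrinks. The only substantive difference is that the paper carries out the estimates explicitly (using $1-x\le e^{-x}\le 1-x+x^2$) to extract a polynomial-size value of $s$, which the subsequent algorithmic theorem needs but the bare existence statement does not.
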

\begin{proof}
Let $\vf$ be the \good frequency vector.  We first remove
from $M$ all modes $m$ such that $\vf(m) = 0$.
We claim that the following periodic controller $\sigma = \seq{(m_k,
t_k)}_{k=1}^\infty$ with period $|M|$ is $S$-safe for sufficiently small $s$: 
$m_k$ = $(k \bmod |M|) + 1$ and $t_k = \vf(m_k) \cdot s$. 
As we already know it suffices to check $S$-safety of the system at time points
$T_k$ for all $k$. We will focus here on checking just the lower bound,
$\vbx(T_k) \geq \low$, because the estimations concerning the upper bound are
very similar.
Note that for any variable $\bx_i$ we have
  \begin{align*}  
  &\bx_i(T_1) = \frac{b^{m_1}_i}{a^{m_1}_i} + \left(\px_0(i) -
  \frac{b^{m_1}_i}{a^{m_1}_i}\right) e^{-a^{m_1}_i t_1}, \text{ and } \\ 
  \bx_i(T_2) = 
  &\frac{b^{m_2}_i}{a^{m_2}_i} +
  \left(\frac{b^{m_1}_i}{a^{m_1}_i} - \frac{b^{m_2}_i}{a^{m_2}_i}\right) e^{-a^{m_2}_i
  t_2} + \left(\px_0(i) - \frac{b^{m_1}_i}{a^{m_1}_i}\right) e^{-a^{m_1}_i t_1 - a^{m_2}_i
  t_2}
  \end{align*}
  and further by induction we get
  \begin{align}  
  \label{eq:k-step}
  \bx_i(T_k) = \frac{b^{m_k}_i}{a^{m_k}_i}& +
  \sum_{n = 1}^{k-1}
  \left(\frac{b^{m_n}_i}{a^{m_n}_i} - \frac{b^{m_{n+1}}_i}{a^{m_{n+1}}_i}\right)
  e^{-\sum_{j=k-n+1}^{k} a^{m_j}_i t_j}
  + \left(\px_0(i) -
  \frac{b^{m_1}_i}{a^{m_1}_i}\right) e^{-\sum_{j=1}^{k} a^{m_j}_i t_j}
  \end{align}  
Now, because $\vf$ is \good, if $\FF_i(\vf, \low_i) = 0$ then it has to be
${b^m_i}/{a^m_i} = \low_i$ for all $m$. In such a case, it is easy to see 
from equation (\ref{eq:k-step}) that $\bx_i(T_k) > \low_i$ for all $k$, because
$\px_0(i) > \low_i$. Therefore, we can assume $\FF_i(\vf, \low_i) > 0$.
Let $\px_l \rmdef \vbx(T_{l|M|})$ for all $l \in \Nat$, $\suma{k}{l} \rmdef
\sum_{j=k}^{l} a^{m_j}_i t_{j} = s \cdot \sum_{j=k}^{l} a^{m_j}_i \vf(m_j)$, $\alpha(s) \rmdef e^{-\suma{1}{|M|}}$, and 
$$\beta(s) \rmdef 
\frac{b^{m_{|M|}}_i}{a^{m_{|M|}}_i} +
  \sum_{n = 1}^{|M|-1}
  \left(\frac{b^{m_n}_i}{a^{m_n}_i} - \frac{b^{m_{n+1}}_i}{a^{m_{n+1}}_i}\right)
  e^{-\suma{|M|-n+1}{|M|}} -
  \frac{b^{m_1}_i}{a^{m_1}_i} e^{-\suma{1}{|M|}}.$$
Notice that since $\sigma$ is periodic with period $|M|$ from
equation (\ref{eq:k-step}) we can deduce 
$\px_{l+1}(i) = \alpha(s) \px_l(i) + \beta(s)$ for all $l$. Now, because
$0 < \alpha(s) < 1$ for all $s$, sequence $\px_l(i)$ converges monotonically
to $\beta(s) / (1-\alpha(s))$ as $l \to \infty$ for any
initial value $\px_0(i)$.

We will now find $s$ of polynomial size such that  
$\beta(s) / (1-\alpha(s)) \geq \low_i$. The last condition is equivalent to
$\beta(s) - \low_i + \alpha(s) \low_i \geq 0$, because $1-\alpha(s) > 0$. 
It is well-known that $1- x \leq e^{-x} \leq 1 - x + x^2$ for all $x \geq 0$.
Let us also denote $d := |\low_i| + |\upp_i| + 2 \cdot \max_{m \in M} |b^m_i /
a^m_i|$.
Notice that for all $m, m' \in M$ we have  
$|b^m_i/a^m_i - b^{m'}_i/a^{m'}_i| \leq d$,
$|\low_i - b^{m}_i/a^{m}_i| \leq d$, as well as $|\px_0(i) - b^{m}_i/a^{m}_i|
\leq d$. Therefore, $  \beta(s) - \low_i + \alpha(s) \low_i =  
(\frac{b^{m_{|M|}}_i}{a^{m_{|M|}}_i} - \low_i) +
  \sum_{n = 1}^{|M|-1}
  (\frac{b^{m_n}_i}{a^{m_n}_i} - \frac{b^{m_{n+1}}_i}{a^{m_{n+1}}_i})
  e^{-\suma{|M|-n+1}{|M|}} + 
  (\low_i - \frac{b^{m_1}_i}{a^{m_1}_i}) e^{-\suma{1}{|M|}}
  \geq 
  (\frac{b^{m_{|M|}}_i}{a^{m_{|M|}}_i} - \low_i) +
  \Big(\sum_{n = 1}^{|M|-1}
  (\frac{b^{m_n}_i}{a^{m_n}_i} - \frac{b^{m_{n+1}}_i}{a^{m_{n+1}}_i})
  (1-\suma{|M|-n+1}{|M|}) - d \cdot (\suma{|M|-n+1}{|M|})^2 \Big) + 
  (\low_i - \frac{b^{m_1}_i}{a^{m_1}_i}) (1-\suma{1}{|M|}) -
  d \cdot (\suma{1}{|M|})^2 
  =
   \sum_{j=1}^{|M|} b^{m_j}_i t_{m_j} - \low_i \suma{i}{|M|} - \sum_{n =
   1}^{|M|-1} d \cdot (\suma{|M|-n+1}{|M|})^2 - d \cdot (\suma{1}{|M|})^2 
  \geq
   s \left(\sum_{j=1}^{|M|} \vf(m_j) b^{m_j}_i - \sum_{j=1}^{|M|} \vf(m_j)
   a^{m_j}_i \low_i\right)  - d \cdot |M|(\suma{1}{|M|})^2
  =
  s \FF_i(\vf, \low_i) - s^2 d |M|(\sum_{j=1}^{|M|} \vf(m_j) a^{m_j}_i)^2
  $. If we now set $s := \frac{\FF_i(\vf, \low_i)}{d |M|(\sum_{m}
  \vf(m) a^{m}_i)^2}$ then the last expression will be $\geq 0$. 
  Notice that this
  bound does not depend on the order of the modes in the period nor on the
  starting state. So if for
  any $k < |M|$ we repeat this estimation for the initial point $\vbx(T_k)$, 
  controller $\sigma'(t) := \sigma (t + T_k)$ and 
  exactly the same $s$, the value of $\bx_i(T_{l|M|+k} - T_k)$ 
  under control $\sigma'$ will also monotonically converge to some value $\geq
  \low_i$ as $l \to \infty$.
  Therefore, as long as $\vbx(T_k)$ is a $S$-safe for all $k < |M|$ for the just selected $s$, 
  all states of the system that follow will be $S$-safe as well.
  Now, if we repeat the same analysis for the upper bound $\upp_i$ then we
  would get an expression $s := \frac{-\FF_i(\vf, \upp_i)}{d |M|(\sum_{m}
  \vf(m) a^{m}_i)^2}$, so it suffices to set $s$ to be the minimum of these two.

  Now, to find $s$ such that the system is $S$-safe for the
  first $|M|$ steps, we can
  estimate 
  $\bx_i(T_k)$ to be $\geq \px_0(i) - T_k \max_m |b^m_i -
  a^m_i \px(0)|$ and $\leq \px_0(i) + T_k \max_m |b^m_i -
  a^m_i \px(0)|$. We have $T_k \leq s$ for
  $k < |M|$ from the definition of $\sigma$ and so it suffices to set $s:= \min
  \{\upp_i - \px_0(i), \px_0(i) - \low_i\} / \max_m |b^m_i - a^m_i \px_0(i)|$
  if $\max_m |b^m_i -
  a^m_i \px(0)| \neq 0$ and otherwise set $s$ to an arbitrary high value  
  in order for the variable $\bx_i$ to be $S$-safe in the first $|M|$ steps. 
  
  Finally, if we pick the minimum value from these estimates on $s$ over all
  possible variables $\bx_i$, we will guarantee that the system is both $S$-safe in the first
  $|M|$ steps as well as after that, because $\bx_i(T_{l|M|+k})$ will
  monotonically converge for every fixed $i$ and $k < |M|$ to a safe state as $l \to
  \infty$. 
\qed
\end{proof}

\begin{algorithm}[p]
\label{alg:safe-schedule}
\caption{Finds a $S$-safe \feasible controller from a given $\px_0 \in S$.} 
\KwIn{MMS $\Hh$, two points $\low$ and $\upp$ that define a 
hyperrectangle $S = \{\px : \low \leq \px \leq \upp\}$ and an initial point
$\px_0 \in S$ such that $\low < \px_0 < \upp$.} 
\KwOut{NO if no $S$-safe
\feasible controller exists from $\px_0$, and a periodic such controller, otherwise.
} 

$I:=M$;

Check whether the following linear program is satisfiable for some frequency
vector $\vf$:
\nllabel{alg-line:safe-cond}    
\begin{align*}
  		\FF_i(\vf, \low_i) &\geq {0} \text{ for
  		all $i\in I$}\nonumber\\
  		\FF_i(\vf, \upp_i) &\leq {0} \text{ for
  		all $i\in I$}. \label{eqn1}
\end{align*}
\nllabel{alg-line:safe-safe}
\If{no satisfying assignment exists}{\Return NO}
Let $\vf^*$ be any frequency vector of polynomial size that satisfies conditions in step \ref{alg-line:safe-cond}.

\Repeat{no mode was removed from $M$ in this iteration}{
\nllabel{alg-line:safe-loop1}    
  		\ForEach{$j \in I$}{
  			Check whether the following linear program is satisfiable for some
  		frequency vector $\vf$:
\nllabel{alg-line:safe-good-freq}
			\begin{align*}
  				\FF_i(\vf, \low_i) &\geq {0} \text{ for
  				all $i\in I\setminus\{j\}$}\nonumber\\
  				\FF_i(\vf, \upp_i) &\leq {0} \text{ for
  				all $i\in I\setminus\{j\}$} \\
  				\FF_j(\vf, \low_j) &> {0} \nonumber \text{ and } \FF_j(\vf, \upp_j) < {0}
  				\nonumber.
			\end{align*}
			\If{no satisfying assignment exists}{
				If $\FF_j(\vf^*, \low_i) = 0$, remove all modes for which $b^m_i/a^m_i \neq
				\low_i$ and otherwise remove all modes for which $b^m_i/a^m_i \neq \upp_i$.

				Remove $j$ from $I$.
			}
		}
}
\nllabel{alg-line:safe-loop2}
Check whether the following linear program is
satisfiable for any frequency vector $\vf$:
\begin{align*}
  		\FF_i(\vf, \low_i) &> {0} \text{ for
  		all $i\in I$}\\
  		\FF_i(\vf, \upp_i) &< {0} \text{ for
  		all $i\in I$}.
\end{align*}
\nllabel{alg-line:safe-LP2}
\If{no satisfying assignment exists or $M = \emptyset$}{\Return NO}
Let $\vf_*$ be any frequency 
vector of polynomial size that satisfies conditions in step \ref{alg-line:safe-LP2}.

Let 
$$
s:=\min_{i\in I} \left(
\frac{\min\{x_0(i) - \low_i, \upp_i - x_0(i)\}}{\max_m |b^m_i -
a^m_i x_0(i)|},
\frac{\min(\FF_i(\vf_*,\low_i),\FF_i(\vf_*,\upp_i))
}{(|\low_i| + |\upp_i| + 2\cdot \max_m |{b^m_i}/{a^m_i}|) (\sum_m a^m_i
\vf_*(m))^2}\right).
$$

{\bf return} the following periodic controller with period $|M|$: $m_k$ = $(k
  \bmod |M|) + 1$ and $t_k = \vf_*(m_k) \cdot s$.
\end{algorithm}

\begin{theorem}
Algorithm \ref{alg:safe-schedule} returns in polynomial time a \hbox{$S$-safe} feasible
controller from $\px_0$ if there exists one.
\end{theorem}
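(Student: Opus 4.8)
The plan is to read Algorithm~\ref{alg:safe-schedule} as a faithful implementation of the characterisation from Theorems~\ref{thm:safe-to-freq} and~\ref{thm:freq-to-safe} — \afeasible $S$-safe controller exists iff \agood frequency vector exists — arranged so that, whenever one exists, the algorithm terminates holding \agood frequency vector $\vf_*$ over a (possibly shrunk) mode set together with exactly the period length $s$ that the proof of Theorem~\ref{thm:freq-to-safe} prescribes. Granting two claims, (i) the algorithm never outputs NO when \afeasible $S$-safe controller exists, and (ii) the returned $\vf_*$ is \agood frequency vector for $\Hh$, the rest is short: Theorem~\ref{thm:freq-to-safe} applied to $\vf_*$ — its proof going through unchanged when $\vf_*$ places zero weight on some surviving modes, which then appear as the zero-length actions that the relaxed notion of \afeasible controller permits — produces from any interior state, in particular from $\px_0$ (this is where $\low<\px_0<\upp$ is used), a periodic $S$-safe controller; and the quantity $s$ computed just before the return is term-by-term the bound extracted in that proof (the first fraction controls $S$-safety over the first $|M|$ switches, the second controls the monotone convergence of $\vbx(T_{l|M|+k})$ to a safe value for each residue $k$). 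The output is \feasible because its minimum dwell time equals $s\cdot\min_{m:\vf_*(m)>0}\vf_*(m)>0$, and non-Zeno because each period has positive total length $s$.

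For (i) I would fix, using Theorem~\ref{thm:safe-to-freq}, \agood frequency vector $\vf$. The program in step~\ref{alg-line:safe-cond} is satisfied by $\vf$, so the algorithm passes that test; the substance is an invariant, proved by induction on the deletions, that $\vf(m)=0$ for every mode $m$ ever removed from $M$, so that $\vf$ stays a frequency vector over the current modes and stays \good. In the inductive step, suppose variable $\bx_j$ is removed: then step~\ref{alg-line:safe-good-freq} is infeasible, which (using convexity of the set of \safe frequency vectors over the current modes, together with $\FF_j(\vf',\low_j)>\FF_j(\vf',\upp_j)$ for every frequency vector $\vf'$, a consequence of $a^m_i>0$ and $\low_j<\upp_j$) forces every such \safe vector to be critical for $\bx_j$ and, moreover, all of them on the same side — at $\low_j$ or at $\upp_j$ — since averaging a $\low_j$-critical vector with a $\upp_j$-critical one would be strictly feasible for step~\ref{alg-line:safe-good-freq}. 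In particular $\vf$ is critical on that common side, say $\FF_j(\vf,\low_j)=0$; condition~(III) in the definition of \good frequency vectors then forces $\vf(m)=0$ for exactly the modes with ${b^m_j}/{a^m_j}\neq\low_j$, and these are precisely the modes the algorithm deletes (the side being the one it detects, through $\vf^*$ taken over the current modes). Hence the invariant persists, so $\supp(\vf)\subseteq M$ throughout and $M\neq\emptyset$ at the end; and in the terminating pass of the \textbf{Repeat} loop step~\ref{alg-line:safe-good-freq} is feasible for every surviving $j\in I$, so the average of those $|I|$ witnesses strictly satisfies all inequalities of step~\ref{alg-line:safe-LP2}. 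Consequently neither NO-test fires and the algorithm returns a controller.

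For (ii) I would examine the final $I$ and $M$ and the vector $\vf_*$ of step~\ref{alg-line:safe-LP2}. For $i\in I$ we have $\FF_i(\vf_*,\low_i)>0$ and $\FF_i(\vf_*,\upp_i)<0$, so $\bx_i$ is not critical and conditions~(III)--(IV) are vacuous for it. For $i\notin I$, the iteration that removed $i$ deleted every mode $m$ whose ${b^m_i}/{a^m_i}$ differs from the bound on the detected side and nothing re-adds modes, so every surviving mode attains that bound; hence $\FF_i(\vf_*,\cdot)$ vanishes there and is strictly signed at the other bound (again by $a^m_i>0$, $\low_i<\upp_i$), so conditions (I)--(IV) hold, with (III)--(IV) vacuous. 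Thus $\vf_*$ is \agood frequency vector for $\Hh$.

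For the running time, the \textbf{Repeat} loop executes at most $|M|+1$ passes, since every non-terminating pass deletes a mode, and each pass solves at most $|I|\le N$ linear programs; together with steps~\ref{alg-line:safe-cond} and~\ref{alg-line:safe-LP2} that is $O(|M|\,N)$ linear programs, each with $|M|$ variables, $O(N)$ constraints, and rational coefficients $b^m_i-a^m_i\low_i$ and $b^m_i-a^m_i\upp_i$ of polynomial bit-length, hence solvable in polynomial time and admitting a basic feasible solution of polynomial size~\cite{Schr86}; the closed form for $s$ is a polynomial amount of arithmetic on numbers of polynomial size. The step I expect to be the main obstacle is (i): making the invariant precise, and in particular arguing that the infeasibility of step~\ref{alg-line:safe-good-freq} makes the criticality of $\bx_j$ one-sided and that the algorithm identifies that side correctly — which, as literally written, needs $\vf^*$ to be (re)chosen over the current mode set, a point the pseudocode glosses over.
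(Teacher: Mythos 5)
Your proposal is correct and follows essentially the same route as the paper's own proof: soundness of the returned controller via Theorem~\ref{thm:freq-to-safe} with the same choice of $s$, completeness via Theorem~\ref{thm:safe-to-freq} plus an invariant that a witnessing \good vector survives the mode-removal loop, and the same polynomial-time accounting. The only difference is cosmetic: where the paper invokes its Lemma~\ref{lem:critical}, you inline the equivalent convexity/averaging argument (and you correctly flag the minor pseudocode imprecision about $\vf^*$ being fixed before the loop, which the paper's terser proof passes over).
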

\begin{proof}
We first need the following lemma whose proof can be found in the appendix.
\begin{lemma}
\label{lem:critical}
Either there is a variable which is critical for all \safe frequency vectors or
there is \asafe frequency vector in which no variable is critical.
\end{lemma}

Now, let $\sigma$ be the controller returned by Algorithm
\ref{alg:safe-schedule}.
Notice that the frequency vector $\vf_*$ the controller $\sigma$ is based on is
\good, because $\vf_*$ satisfies the constraints at line \ref{alg-line:safe-LP2} which imply the
conditions (I) and (II) of $\vf$ being \good and from Lemma \ref{lem:critical}
it follows that all modes that could violate the conditions (III) and (IV) were
removed in the loop between lines \ref{alg-line:safe-loop1}--\ref{alg-line:safe-loop2}. Moreover, constant $s$
used in the construction of $\sigma$ is exactly the same as the one used in
Theorem \ref{thm:freq-to-safe} which guarantees $\sigma$ to be $S$-safe.

On the other hand, from Theorem \ref{thm:safe-to-freq}, if there exists a
feasible $S$-safe controller then there also exists \agood frequency vector $\vf$. Such a vector will
satisfy the constraints of being \safe at line \ref{alg-line:safe-safe} of the
algorithm. In the loop between the lines \ref{alg-line:safe-loop1}--\ref{alg-line:safe-loop2},
all variables that are critical in $\vf$ are first checked whether they satisfy conditions (III) and (IV),
and they will satisfy them because $\vf$ is \good, and after that these critical variables are removed.
Finally, $\vf$ consisting of just the remaining variables will satisfy the
constraints at line \ref{alg-line:safe-good-freq} of being \good with no critical variables. 
Therefore, Algorithm \ref{alg:safe-schedule} will always return a controller if
there exists a $S$-safe one.

It is easy to see that Algorithm \ref{alg:safe-schedule} runs in
polynomial time, because at least one critical variable is removed in each iteration of the loop between lines
\ref{alg-line:safe-loop1}--\ref{alg-line:safe-loop2}, one iteration
checks at most $N$ remaining variables, and each such a check requires calling a
linear programming solver which runs in polynomial time. 
Finally, steps 4 and 13 of Algorithm \ref{alg:safe-schedule} are achievable, 
because if a linear program has a solution then it has
a solution of polynomial size (see, e.g. \cite{Schr86}). 
This shows that the size of the returned controller is always polynomial.
\qed 
\end{proof}

Notice that the controller returned by Algorithm \ref{alg:safe-schedule} has a
polynomial-size minimum dwell time. We do not know whether 
finding a safe controller with the largest possible dwell time is decidable,
nor is checking whether such a minimum dwell time can be greater than $\geq 1$. 
We now show that the last problem is \pspace-hard, so it is unlikely to be tractable.
The details of the proof are in the appendix. Finally, this also implies \pspace-hardness of checking whether a safe controller exists in the case
the system is controlled using a digital clock, i.e. when all timed delays have to be a multiple of some given sampling rate $\Delta > 0$. 

\begin{theorem}
\label{thm:min-dwell-time-hardness}
For a given MMS $\Hh$, hyperrectangular safe set $S$ described by two points
$\low, \upp$, starting point $\px_0 \in S$, checking whether
there exits a $S$-safe controller with minimum dwell time $\geq 1$ is \pspace-hard.
\end{theorem}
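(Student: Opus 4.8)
I would prove \pspace-hardness by a reduction from a canonical \pspace-complete problem. The natural candidate is the halting/acceptance problem for a \emph{linear-bounded automaton} (LBA) on a given input, or equivalently the reachability problem for a Turing machine using space polynomial in the input. The key insight is that an MMS controller with minimum dwell time $\geq 1$ is, in effect, forced to make discrete decisions at integer-spaced time points, so one can use the continuous variables of the MMS to \emph{encode a tape configuration} and the choice of mode at each step to encode the transition the simulated machine takes. The safety hyperrectangle $S$ then plays the role of a global invariant that is violated exactly when the simulated machine reaches a rejecting (or, depending on the framing, an accepting) configuration or performs an illegal move.

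\textbf{Encoding.} First I would describe how to store a configuration of an LBA with tape of length $n$ in the MMS state. One clean way: use a bounded number of variables whose values lie in $[0,1]$ (after rescaling) and encode the tape contents in a positional way, e.g. the tape word $w \in \{0,1\}^n$ as the rational $\sum_i w_i 2^{-i}$ held in one variable, together with auxiliary variables for the head position and control state. Because the dynamics in each mode is $\dot{\bx}_i = b^m_i - a^m_i \bx_i$, over a unit time step a variable moves a controlled fraction of the way toward its target $b^m_i/a^m_i$; by choosing the $a^m_i$ and $b^m_i$ appropriately (and using the fact, noted in the excerpt, that $\bx_i(t_0+t) = b^m_i/a^m_i + (\bx_i(t_0) - b^m_i/a^m_i)e^{-a^m_i t}$), one can realise the affine "shift-and-add" updates needed to simulate a read, a write, and a head move. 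The constraint that the dwell time is $\geq 1$ is what pins the system to a clean, discretised computation: with a shorter dwell the adversary controller could take "fractional" steps and cheat, so the $\geq 1$ threshold is exactly the gadget that makes the encoding faithful. I would make each simulated machine step take exactly one time unit (or a fixed constant), and have $S$ be violated as soon as an inconsistent or rejecting configuration is produced.

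\textbf{Correctness of the reduction.} I would then argue the two directions. If the LBA does \emph{not} halt (or does not reach the bad state) on its input, the controller that faithfully simulates the unique run keeps all variables inside $S$ forever and uses dwell time exactly $1$, so a safe controller with minimum dwell time $\geq 1$ exists. Conversely, if any safe controller with dwell time $\geq 1$ exists, one shows by induction on the step count that at each integer time the MMS state must encode a legal configuration of the machine — any deviation (wrong mode, or a dwell strictly between integer multiples that tries to desynchronise) either immediately forces a variable outside $S$ within the next unit interval, or, because of monotonicity of $\bx_i$ within a mode and the geometry of the hyperrectangle, cannot be corrected later — hence the machine's run avoids the bad state forever, contradicting $\pspace$-hardness of that being undecidable... rather, contradicting the assumption that it halts. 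All numbers used are rationals of size polynomial in $n$, so the MMS, the points $\low,\upp$, and $\px_0$ have polynomial-size binary representations, completing the reduction. The remark about digital-clock control ($\Delta$-sampling) then follows immediately, since a $\Delta$-sampled controller is in particular one whose dwell times are multiples of $\Delta$, and rescaling time reduces the $\Delta$ case to the $\geq 1$ case.

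\textbf{Main obstacle.} The hard part will be the \emph{faithfulness of the simulation under an adversarial controller}: I must rule out "clever" controllers that exploit the continuous dynamics — e.g. dwelling for non-integer amounts of time, or using modes in unintended orders — to keep the state inside $S$ without actually simulating the machine, or to simulate it incorrectly yet safely. Handling this requires carefully designing the modes so that (i) any dwell time strictly greater than needed pushes some "clock" variable monotonically toward a boundary of $S$, forcing resynchronisation, and (ii) the set of target points $b^m_i/a^m_i$ together with $\low,\upp$ is arranged so that the only trajectories staying in $S$ are those that alternate correctly between a "compute" phase and a "check" phase. Getting these invariants exactly right — and verifying that the polynomially many variables suffice to both store the tape and enforce the timing discipline — is where the real work of the proof lies; the rest is bookkeeping.
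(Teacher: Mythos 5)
Your choice of source problem (LBA acceptance) and your identification of the role of the dwell-time bound (it discretises the controller's decisions) match the paper's reduction. However, there are two genuine gaps. First, your proposed \emph{encoding} of the tape as a single positional rational $\sum_i w_i 2^{-i}$ cannot be implemented in this model: over a dwell of length $t$ in mode $m$ the update is $x \mapsto b^m_i/a^m_i + (x - b^m_i/a^m_i)e^{-a^m_i t}$, i.e.\ a contraction towards a rational target with factor $e^{-a^m_i t} < 1$ (recall $a^m_i > 0$ is required). A shift-left (multiplication by $2$) is therefore impossible, and an exact shift-right would need $e^{-a^m_i} = 1/2$, i.e.\ an irrational coefficient, which cannot appear in the (rational, polynomial-size) input. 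The paper avoids this entirely by using one variable \emph{per tape cell} (plus one per state--position pair and per position--transition pair) and encoding only a single bit in the sign of each variable, with robust thresholds ($\approx \pm 0.718$) so that the inexact exponential dynamics never needs to be inverted. It also exploits a quantitative gadget you do not supply: with targets $\pm 2$ and rate $1$, a dwell of length $1$ flips the sign of a variable whose value is beyond the threshold, while a dwell of length $\geq \ln 3$ makes it unsafe --- this is what pins every dwell into a narrow window and rules out the ``cheating'' controllers you correctly worry about.

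Second, your correctness argument has the acceptance direction muddled and omits the termination mechanism. In the paper a safe controller with dwell $\geq 1$ exists \emph{iff the LBA accepts}: acceptance is rewarded with a pair of modes between which the controller can alternate safely forever, while non-acceptance is punished by a dedicated ``timer'' variable $\bx_T$ with safe interval $[-1,-0.9]$ and an extremely small rate, which decays past its safety bound after roughly $1.1\cdot|Q|\cdot L\cdot|\Sigma|^L$ time units --- precisely the exponential bound on the number of LBA configurations. Your sketch instead ties safety to the machine ``not halting,'' gives no way for a faithful simulation to continue safely forever (an LBA has finitely many configurations, so the simulation itself cannot run forever without either accepting or being cut off), and gives no mechanism forcing the simulation to reach a verdict in bounded time. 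Without the accepting-loop gadget and the slow timer, neither direction of the reduction goes through.
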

\begin{proof}{\em (Sketch)}
The proof is similar to the \pspace-hardness proof in \cite{ATW12b} of
the discrete-time reachability in constant-rate MMS that reduces from 
the acceptance problem for linear bounded automata (LBAs), but our reduction 
is a lot more involved, because of the differences in the dynamics of the system.
For instance, we deal with a decision problem for the minimum dwell time of a safe continuous-time controller
instead of a discrete-time one.
Also, unlike for constant-rate MMS, 
there is no possibility to keep the value of any variable constant over time
regardless of its current value. 
To overcome this problem, we will
take advantage of the fact that for every LBA and input word, there exists
an exponential upper bound on the number of steps this LBA can take before the input word is accepted.
\qed
\end{proof}

Notice that the periodic controller returned by Algorithm \ref{alg:safe-schedule} just cycles forever over the set of modes in some fixed order
which can be arbitrary. This allows us to extend the model by specifying an initial mode $m_0$ and a directed graph $G \subseteq M \times M$, 
which specifies for each mode which modes can follow it. Formally, we require any controller 
$\seq{(m_1, t_1), (m_2, t_2), \ldots}$ to satisfy $(m_i,m_{i+1}) \in G$ for all $i \geq 1$ and $m_1 = m_0$.
The proof is in the appendix.

\begin{corollary}
\label{cor:cor}
  Deciding whether there exists \afeasible $S$-safe \controller for a given MMS $\Hh$
  with a mode order specification graph $G$, initial mode $m_0$, a hyperrectangular safe set
  $S$ given by two points $\low$ and $\upp$ and an initial point $\low < \px_0 < \upp$ can be done in 
  polynomial time.  
\end{corollary}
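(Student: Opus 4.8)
The plan is to reduce the graph-constrained safe controllability problem to the unconstrained one solved by Algorithm~\ref{alg:safe-schedule}, applied to a carefully chosen subset of modes. The key observation is that the periodic controller produced by Algorithm~\ref{alg:safe-schedule} cycles over its modes in an \emph{arbitrary} fixed order; the only thing that matters for $S$-safety is the frequency vector and the fact that the minimum dwell time $s$ can be taken polynomially small. So what we really need is: a set $M' \subseteq M$ of modes, all reachable from $m_0$, that is \emph{strongly connected} in $G$ (so that we can realize \emph{any} cyclic order over $M'$ as a legal infinite path in $G$ by inserting zero-delay detours through $G$ between consecutive used modes), together with a frequency vector supported on $M'$ that is \good{}. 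Conversely, any feasible $S$-safe controller respecting $G$ spends, in the limit, positive frequency only on modes lying in a single bottom strongly connected component of the subgraph of $G$ it actually visits, and by Theorem~\ref{thm:safe-to-freq} the induced frequency vector is \good{}; this component is reachable from $m_0$.

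Concretely, the algorithm I would run is: compute the set $R \subseteq M$ of modes reachable from $m_0$ in $G$; enumerate the strongly connected components of $G$ restricted to $R$; for each such SCC $C$, run (the linear-programming core of) Algorithm~\ref{alg:safe-schedule} on the MMS restricted to the modes in $C$, i.e.\ test whether there is \agood{} frequency vector supported on $C$. If some $C$ admits one, build the periodic controller of Theorem~\ref{thm:freq-to-safe} over $C$, then splice in between every two consecutive timed actions $(m_k,t_k),(m_{k+1},t_{k+1})$ a path in $G$ from $m_k$ to $m_{k+1}$ (which exists since $C$ is strongly connected) with all intermediate timed delays set to $0$; prepend a zero-delay path from $m_0$ into $C$ (possible since $C \subseteq R$). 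By the paper's convention a feasible controller may list unused modes with delay $0$ and need only be non-Zeno, and inserting finitely many zero-delay actions per period keeps the controller periodic, non-Zeno, and with the same positive minimum dwell time $s$; since the continuous state does not move during a zero-delay action, $S$-safety at the switching points is exactly as analysed in Theorem~\ref{thm:freq-to-safe}. If no SCC works, return NO.

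For correctness of the NO case: suppose a feasible $S$-safe controller $\sigma$ respecting $G$ exists. Only finitely many modes appear in $\sigma$ with positive frequency in the Bolzano--Weierstrass limit $\vf$ of Theorem~\ref{thm:safe-to-freq}; more carefully, the set of modes used infinitely often forms a subgraph of $G$ all of whose ``recurrent'' vertices lie in a single SCC (a standard fact about infinite walks in finite digraphs — the walk eventually stays within one SCC), and that SCC is reachable from $m_0$, hence lies in $R$. Restricting $\vf$ to that SCC still satisfies conditions (I)--(IV), so the LP test for that $C$ succeeds. Thus the algorithm answers YES iff a controller exists. Running time is polynomial: $R$ and the SCCs are computed in linear time, there are at most $|M|$ SCCs, and each invokes the polynomial-time LP core of Algorithm~\ref{alg:safe-schedule}; the spliced-in $G$-paths have length $<|M|$, so the returned controller still has polynomial size.

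The main obstacle I anticipate is the precise handling of conditions (III)--(IV) when passing to an SCC $C$: these conditions couple a critical variable to \emph{all} modes $m$ with $\vf(m)>0$, so one must verify that restricting the support of a \good{} vector to a sub-support keeps it \good{} (it does, since removing modes only removes constraints of the form ``$\vf(m)=0$''), and, in the forward direction, that the \good{} vector furnished for $C$ by the algorithm really is \good{} as a frequency vector over \emph{all} of $M$ after the removed modes are zeroed out — which again holds because the removed modes get frequency $0$. A secondary subtlety is making sure the zero-delay splice does not accidentally violate $S$-safety: since each zero-delay action leaves $\vbx$ unchanged and the argument in Theorem~\ref{thm:freq-to-safe} only inspects $\vbx(T_k)$ at switching times, the inserted points coincide with already-safe points, so nothing breaks. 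Beyond these bookkeeping points the corollary follows directly from Theorems~\ref{thm:safe-to-freq} and~\ref{thm:freq-to-safe}.
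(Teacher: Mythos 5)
Your proposal is correct and follows the same architecture as the paper's proof: restrict attention to the SCCs of $G$ reachable from $m_0$, test each one with (the LP core of) Algorithm~\ref{alg:safe-schedule}, and justify the NO answer by the fact that an infinite non-Zeno walk in a finite digraph eventually remains inside a single SCC, so the limit frequency vector of Theorem~\ref{thm:safe-to-freq} is supported there. Where you diverge is in how a cyclic order over an SCC $C$ is realised as a legal walk in $G$: the paper builds a single positive-delay mode \emph{sequence} $\rho_C$ that visits every mode of $C$ (possibly with repetitions, each occurrence carrying its own positive share of the dwell time), and it handles the initial segment from $m_0$ into $C$ with small positive delays, which forces it to bracket the possibly irrational state $\px_1$ reached at the end of that prefix between two rational points $\px_l \leq \px_1 \leq \px_u$ and recompute $s$ for both. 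Your zero-delay splices avoid that rounding step entirely and are cleaner; the one caveat is that the paper's formal relaxation only licenses delay $0$ for modes ``not used at all'' by the controller, whereas your detours may pass with delay $0$ through modes that elsewhere carry positive delay. Nothing in the analysis breaks (the state does not move during a zero-delay action, $\tmin$ and non-Zenoness are unaffected, and safety is only checked at switching points), but to stay strictly within the paper's definitions you would either extend that convention explicitly or fall back on the paper's device of assigning every occurrence in the spliced walk a positive delay. This is a definitional wrinkle, not a gap.
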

  \section{Optimal Schedulability}
  \label{sec:optimal}
  In this section we extend our results on $S$-safe controllability of MMS to a
model with costs per time unit on modes. We will call this model {\em priced linear-rate multi-mode systems}.
The aim is to find an $S$-safe
controller with the minimum cost where the cost is either defined as the peak
cost, the (long-time) average cost or a weighted sum of these.
\begin{definition}
  \label{def:pMMS}
  A priced linear-rate multi-mode system (MMS) is a tuple $\Hh = (M, N, A, B, \pi)$
  where $(M, N, A, B)$ is a MMS and ${\pi: M \to \Rzero}$ is a cost function
  such that $\pi(m)$ characterises the cost per-time unit of staying in mode
  $m$.
\end{definition}

We define the (long-time) average cost of an infinite controller $\sigma = \seq{(m_1, t_1),
(m_2, t_2), \ldots}$ as the long-time average of the cost per time-unit over time, i.e.
$$\avgcost(\sigma) \rmdef \limsup_{k \to \infty} \frac{\sum_{i=1}^k \pi(m_i) \cdot
t_i} {\sum_{i=1}^k t_i}.$$
For the results to hold it is crucial that $\limsup$ is used in this definition
instead of $\liminf$. In the case of minimising the average cost, it is more natural to
minimise its $\limsup$ anyway, which intuitively is its reoccurring maximum value. 
On the other hand, the peak cost is simply defined as $\peakcost(\sigma) \rmdef \sup_{\{k\,:\ t_k > 0\}}
\pi(m_k).$

We will try to answer the following question for priced MMS.
\begin{definition}[Optimal Controllability]
  Given a priced MMS $\Hh$, a hyperrectangular safe set $S$ defined
  by two points $\low$ and $\upp$, an initial point $\px_0 \in S$ such that
  $\low < \px_0 < \upp$, and constants $\cavg, \cpeak \geq 0$, find an $S$-safe
  controller $\sigma$ with the minimum value of $\cavg \avgcost(\sigma) + \cpeak
  \peakcost(\sigma)$.
\end{definition}

The following example shows that 
such a weighted cost does not always increase with 
the increase in the peak cost.

\begin{example}
The table below shows the values
of $b^m_i$ for each mode $m \in M = \{m_1, m_2, m_3\}$ and variable $\bx_i$ 
as well as the cost of each
mode. We assume that all $a^m_i$-s are equal to $1$. The
safe value interval for each variable is $[0,1]$, i.e. $\low_i = 0$, $\upp_i
= 1$ for all $i$.
\begin{center}
    \begin{tabular}{|c|c|c|c|c|}
      \hline
      Modes & $m_1$ & $m_2$ & $m_3$ & $m_4$\\
       \hline
${b^m_1}$ & $-1$ & $2$ & $-1$ & $5$\\
       \hline
${b^m_2}$ & $-1$ & $-1$ & $2$ & $5$\\
       \hline
$\pi$ (cost) & $0$ & $3$ & $3$ & $4$\\
       \hline
    \end{tabular}
  \end{center}
  One can compute that the optimal average cost of any $S$-safe controller which uses only modes from
  $M' = \{m_1,m_2,m_3\}$ is equal to $2$ and that average cost is achieved when the frequency of each mode from $M'$
  is equal to $\frac{1}{3}$. At the same time, the peak cost of that controller is $3$.  
  On the other hand, there is a $S$-safe controller for the whole set of modes $M$ 
  with peak cost $4$ and average cost just
  $\frac{5}{6}$, when the frequency of mode $m_1$ is $\frac{5}{6}$ and the frequency of mode
  $m_4$ is $\frac{1}{6}$. If we assume that the weighted cost of a controller
  $\sigma$ is $\peakcost(\sigma) + \avgcost(\sigma)$, then clearly the second controller 
  has a lower weighted cost although it has a higher peak cost.
\end{example}

\begin{algorithm}
\label{alg:opt-schedule}
\caption{Finds an optimal $S$-safe \feasible controller from a given $\px_0 \in
S$.} \KwIn{A priced MMS $\Hh$, two points $\low$ and $\upp$ that define a 
hyperrectangle $S = \{\px : \low \leq \px \leq \upp\}$ and an initial point
$\px_0 \in S$ such that $\low < \px_0 < \upp$, and constants $\cavg$ and
$\cpeak$ which define the weighted cost of a controller.} \KwOut{NO if no $S$-safe
\feasible controller exists from $\px_0$, and an periodic such controller $\sigma$
for which $\cpeak \peakcost(\sigma) + \cavg \avgcost(\sigma)$ is
minimal, otherwise.
} 

$\minsize:=1$;

\Repeat{$\minsize < |M|$ {\bf and} the call returned NO}{

$\minsize := 2\cdot\minsize$;

Pick minimal $p$ such that $M_{\leq p}$, the set of all modes with cost at
most $p$, has size at least $\minsize$.

Call Algorithm \ref{alg:safe-schedule} for the set of modes $M_{\leq p}$.
}

\If{the last call to Algorithm \ref{alg:safe-schedule} returned NO}
{{\bf return} NO.}

Perform a binary search to find the minimal $p$ such that $M_{\leq p}$ is
feasible using the just found upper bound on the minimal feasible set of modes.

Modify Algorithm
\ref{alg:safe-schedule} by adding
the objective function
{\em Minimise $\sum_{m\in M}{\vf_m \pi(m)}$}
to the linear program at line 
\ref{alg-line:safe-LP2}. Let $\optavgcost(M')$ 
be the value of this objective when Algorithm \ref{alg:safe-schedule} is
called for the set of modes $M'$. 

$p':=p'+\frac{\cavg}{\cpeak}\optavgcost(M_{\leq p})$;

\Repeat{$p'$ decreases}{

$p':=p'+\frac{\cavg}{\cpeak}\left(\optavgcost(M_{\leq p}) - \optavgcost(M_{\leq
p'})\right)$; }

Pick a peak value $p^* \in [p, p']$ for which 
$\cpeak p^* + \cavg \optavgcost(M_{\leq p^*})$ is
the smallest.

{\bf return} the periodic controller returned by
the modified version of Algorithm \ref{alg:safe-schedule} called for the set of modes $M_{\leq p^*}$.
\end{algorithm}

The algorithm that we define is designed to cope with 
systems where the set of modes is large and given implicitly like in \cite{NBMJ11}, where
the input is a list of heaters with different output levels and energy costs.
Each heater is placed in a different zone and any possible on/off combination of the heaters
gives us a different mode in our setting, which leads to 
exponentially many modes in the size of the input.
The cost of a mode is  the sum
of the energy cost of all heaters switched on in that particular mode.
We try to deal with this setting by using binary search and a specific narrowing down technique to consider only
the peak costs for which the weighted cost can be optimal.
Unfortunately our algorithm will not run in time polynomial in the number of heaters, but the techniques used 
can reduce the running time in practice.
If we assume that modes are given explicitly as the input then there is a much simpler algorithm
which runs in polynomial time and is presented in Appendix \ref{app:alg} as Algorithm \ref{alg:opt-schedule-simpler}. 
Let us now fix a MMS with costs $\Hh = (M, N, A, B, \pi)$, the safe set
$S$ and a starting point $\px_0$ in the interior of $S$.

\begin{theorem}
Algorithm \ref{alg:opt-schedule} finds a $S$-safe feasible periodic controller that optimises the
weighted cost defined by the peak and average cost coefficients $\cpeak$
and $\cavg$.
\end{theorem}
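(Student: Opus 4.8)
The plan is to verify that Algorithm~\ref{alg:opt-schedule} does three things correctly: (i)~it terminates and, via the initial doubling loop followed by binary search, finds the minimum peak cost $p_{\min}$ for which $M_{\leq p_{\min}}$ admits a feasible $S$-safe controller; (ii)~for every peak value $p$ it can evaluate $\optavgcost(M_{\leq p})$, the minimum average cost over all $S$-safe controllers using only modes of cost $\leq p$; and (iii)~the narrowing-down loop correctly bounds the range $[p, p']$ of peak values that can possibly be optimal, so that the final scan over $p^\ast \in [p,p']$ returns a genuinely weighted-cost-optimal controller. I would treat these in order.

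For~(i), the key observation — stated already in the introduction and used implicitly in Section~\ref{sec:safe} — is that a peak cost of at most $p$ is achievable by some $S$-safe controller iff there is an $S$-safe controller for the sub-MMS restricted to $M_{\leq p}$; this is because the peak cost only depends on the set of modes actually used with positive dwell time, and restricting to $M_{\leq p}$ is exactly the set of modes of cost $\leq p$. By Theorems~\ref{thm:safe-to-freq} and~\ref{thm:freq-to-safe}, feasibility of $M_{\leq p}$ is monotone in $p$ (adding modes can only enlarge the set of \good{} frequency vectors), so feasibility is a threshold property and binary search is sound. The doubling loop first finds \emph{some} feasible $M_{\leq p}$ (or certifies none exists, returning NO) while touching only $O(\log |M|)$ sets, each of size at most the current $\minsize$; this is what keeps the implicit-input version from immediately blowing up. I would just check that the loop's stopping condition ``$\minsize < |M|$ and the call returned NO'' together with the subsequent \texttt{if} handles the all-infeasible case correctly.

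For~(ii), $\optavgcost(M')$ is obtained by the modified Algorithm~\ref{alg:safe-schedule}: for a set of modes $M'$, the average cost of the periodic controller built from a \good{} frequency vector $\vf$ is exactly $\sum_{m \in M'} \vf(m)\pi(m)$ (the time-fractions of the periodic controller equal $\vf$), so minimising this linear objective over the polytope of \safe{} (hence \good, after the mode-removal loop) frequency vectors is a linear program and yields the true minimum average cost — here it is essential, as the paper stresses, that the \good{}-vector condition of Section~\ref{sec:safe} is \emph{necessary} as well as sufficient, so no $S$-safe controller can beat this LP optimum. I would also note that $\optavgcost$ is nonincreasing in the mode set, hence nonincreasing in $p$.

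For~(iii), which I expect to be the main obstacle, the point is that the weighted cost $\cpeak \cdot p + \cavg \cdot \optavgcost(M_{\leq p})$ is \emph{not} monotone in $p$ (Example~3), so one cannot binary-search it; instead one must bound the set of candidate $p$. The argument is: if $p^\ast$ is weighted-optimal then $\cpeak p^\ast + \cavg\optavgcost(M_{\leq p^\ast}) \leq \cpeak p + \cavg\optavgcost(M_{\leq p})$ where $p = p_{\min}$ is the minimum feasible peak, which rearranges to $p^\ast \leq p + \tfrac{\cavg}{\cpeak}\bigl(\optavgcost(M_{\leq p}) - \optavgcost(M_{\leq p^\ast})\bigr)$. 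Since $\optavgcost(M_{\leq p^\ast}) \geq \optavgcost(M_{\leq p'})$ whenever $p^\ast \leq p'$, the fixed-point iteration $p' \leftarrow p + \tfrac{\cavg}{\cpeak}\bigl(\optavgcost(M_{\leq p}) - \optavgcost(M_{\leq p'})\bigr)$ produces a \emph{valid upper bound} on $p^\ast$ at every step provided it is nonincreasing, and it is nonincreasing because $\optavgcost(M_{\leq p'})$ is nonincreasing in $p'$; the loop terminates because $\optavgcost$ takes only finitely many values as $p'$ ranges over the finitely many distinct mode-costs. Once $[p,p']$ provably contains every optimal $p^\ast$, scanning the finitely many relevant peak-cost thresholds in that interval and taking the minimiser is correct, and returning the periodic controller that the modified Algorithm~\ref{alg:safe-schedule} produces for $M_{\leq p^\ast}$ gives an $S$-safe feasible periodic controller of optimal weighted cost. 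I would close by remarking that all the pieces — LP solves, the mode-removal loop of Algorithm~\ref{alg:safe-schedule}, the doubling and narrowing loops — are individually correct by the results of Section~\ref{sec:safe} and finite, so the whole procedure is correct and terminates. $\qed$
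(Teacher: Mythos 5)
Your proposal is correct and follows essentially the same three-part decomposition as the paper: doubling plus binary search for the minimum feasible peak cost, an LP objective $\sum_m \vf(m)\pi(m)$ whose optimality over all safe controllers rests on the necessity direction of Theorem~\ref{thm:safe-to-freq}, and the iterated upper bound $p' \leftarrow p + \tfrac{\cavg}{\cpeak}\bigl(\optavgcost(M_{\leq p}) - \optavgcost(M_{\leq p'})\bigr)$ to narrow the candidate peak values. The only points where the paper is slightly more explicit are the $\limsup$-of-a-subsequence argument showing that an arbitrary (non-periodic) safe controller cannot beat the LP optimum, and the degenerate cases $\cpeak = 0$ and $\cavg = 0$, which your division by $\cpeak$ silently excludes.
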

\begin{proof}
Let $M_{\leq
p}$ denote the set of modes with cost at most $p$.
First, to find the minimum peak cost among all $S$-safe controllers we can first
order all the modes
according to their costs and then 
the algorithm makes a binary search on the
possible peak cost $p$, i.e. guesses initial $p$ and checks whether $M_{\leq
p}$ has a $S$-safe controller; if it
does not then it doubles the value of $p$ and if it does then halves the value of $p$. 
In may be best to start with a small value of $p$ first, because the bigger $p$ is, the bigger is the set
of modes and the slower is checking its feasibility.

Second, to find the minimum average cost among all $S$-safe controllers, notice that the average
cost of the periodic controller returned by Algorithm \ref{alg:safe-schedule} based on the
frequency vector $\vf_*$ is $\sum_{m\in M} \vf_*(m) \pi(m)$.
Therefore, if we find \agood frequency vector which minimises that value, then we
will also find a safe controller with the minimum average-cost among
all periodic safe controllers. This can be easily done by adding 
the objective {\em Minimize} $\sum_m \vf_*(m) \pi(m)$ to the linear program at line
\ref{alg-line:safe-LP2} of Algorithm \ref{alg:safe-schedule}.
However, using similar techniques as in Theorem \ref{thm:safe-to-freq} we can show that no other controller can have a lower average-cost.
The key observation is the fact that $\avgcost(\sigma) = \limsup_{k \to \infty}
\sum_{m \in M} f^{(m)}_k \pi(m) \geq \limsup_{k \to \infty} \sum_{m \in M} f^{(m)}_{j_k} \pi(m) = \sum_m
\vf(m) \pi(m)$ where just like in the proof of Theorem \ref{thm:safe-to-freq},
$f^{(m)}_k$ is the frequency of being in mode $m$ up to the $k$-th timed action and
$\seq{j_k}_{k\in\Nat}$ defines a subsequence of $f^{(m)}_k$ that converges for every
$m$. The second inequality holds because a $\limsup$ of a subsequence is at most equal to the $\limsup$ of the whole sequence.

For any set of modes $M' \subseteq M$, 
let $\optavgcost(M')$ denote the minimum average cost when only modes in $M'$ can be used.
Now, if $\cpeak = 0$ then it suffices to compute the optimal
average cost for the whole set of modes to find the minimum weighted cost. Otherwise, 
to find a safe controller with the minimum value of $\cpeak
\peakcost(\sigma) + \cavg \avgcost(\sigma)$ the algorithm first
finds a feasible set of modes with the minimum peak cost and let us denote that peak cost
by $p_{min}$. If $\cavg = 0$ then this suffices. Otherwise, observe that from the definition the
cost of each mode is always nonnegative and so the average cost has to be as well. Even
if we assume that the average cost is equal to $0$ for some larger set of modes with peak cost $p$, the weighted cost
will at least be equal to $\cpeak p$ as compared to $\cpeak p_{min} + \cavg
\optavgcost(M_{\leq p_{min}})$, which gives us an upper bound on the maximum value of $p$
worth considering to be $p' = p_{min} +
\frac{\cavg}{\cpeak}(\optavgcost(M_{\leq p_{min}}))$. But now we can check the
actual value of $\optavgcost(M_{\leq p'})$ instead of assuming it is $=0$ and calculate again a new bound on the maximum peak value
worth considering and so on.
To generate modes on-the-fly in the order of increasing costs, we can use Dijkstra algorithm with a priority queue.
\qed
\end{proof}

   \section{Numerical Simulations}
   \label{sec:experimental}
   \begin{figure}[htbp]
\hskip-0.8in
   \includegraphics[width=0.65\hsize, angle=0]{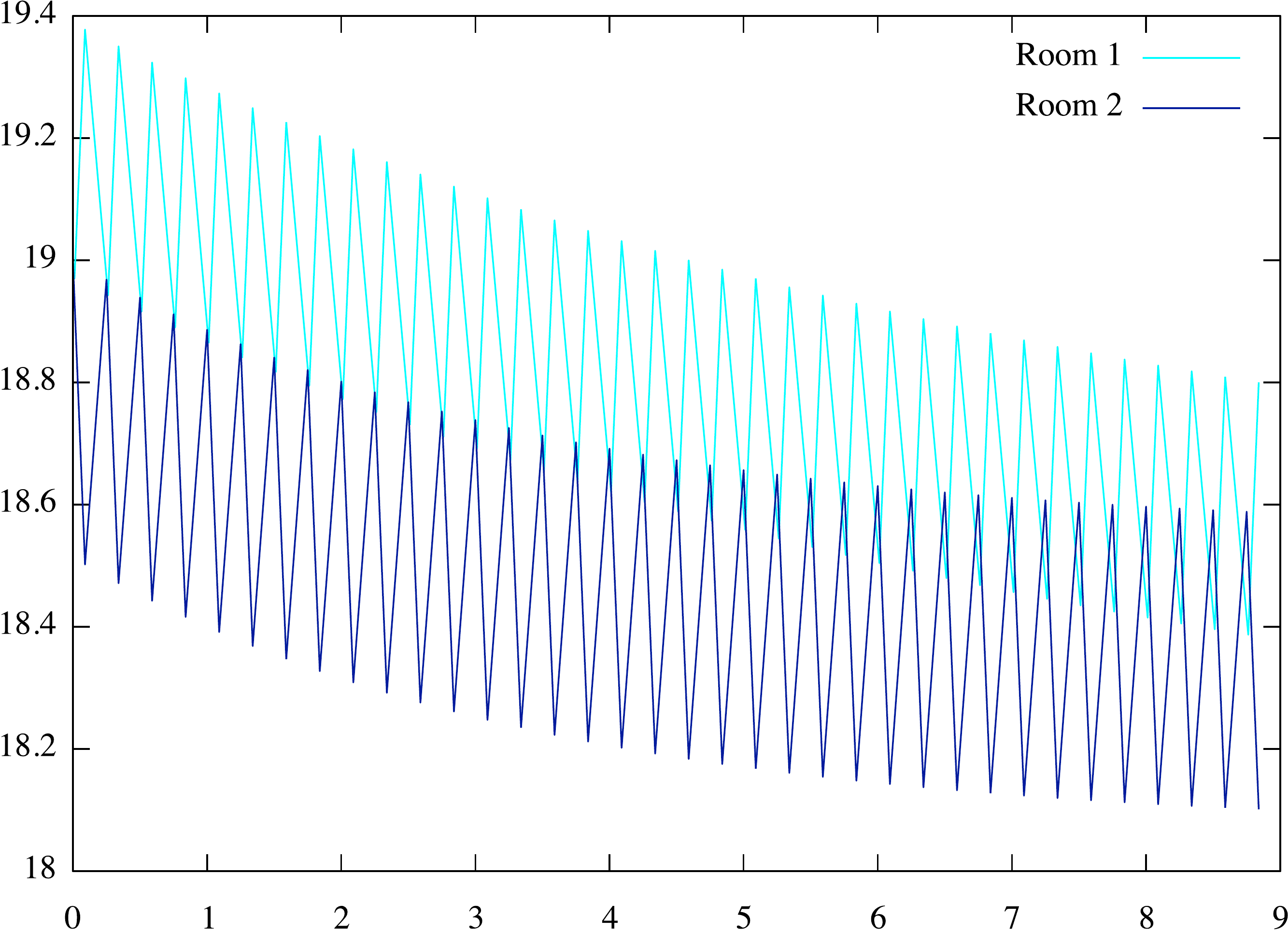}
   \includegraphics[width=0.65\hsize, angle=0]{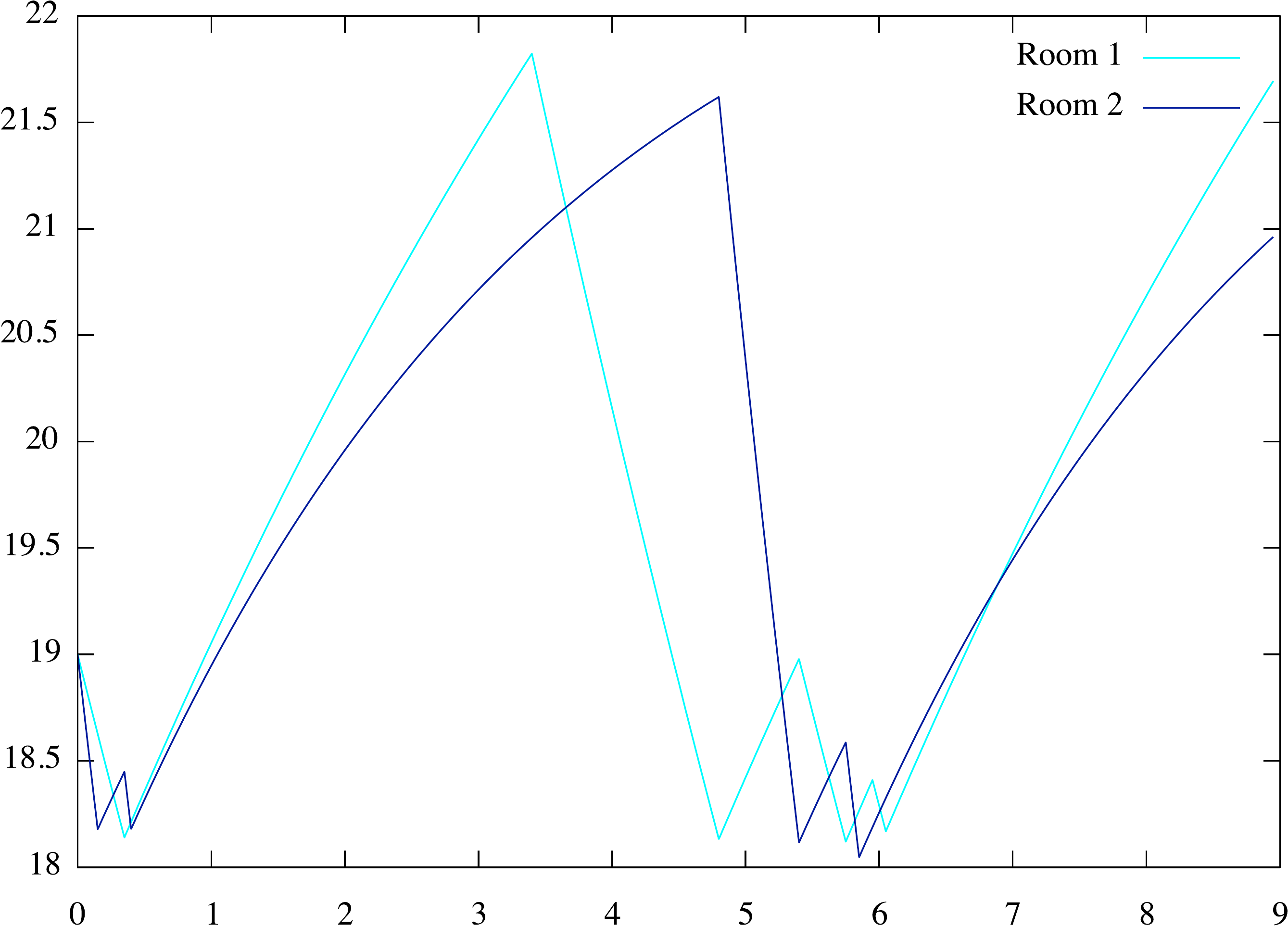}
    \caption{Comparison of temperature evolution under optimal and lazy control in an organisation consisting of two zones.
    The safe
    temperature is between 18$^\circ$C and 22$^\circ$C. On the left, a periodic
    controller with the minimum peak cost which was then optimised for
    the minimal average-cost. On the right, the behaviour of the lazy
    controller.
    The y-axis is temperature in $^\circ$C and the x-axis measures time in
    hours. The optimal controller used 3 modes and its minimum dwell time was 43 seconds.
    On the other hand, the lazy controller used 5 different modes and its minimum dwell time was 180 seconds.}
    \label{fig:example-run}
\end{figure}

We have implemented Algorithms \ref{alg:safe-schedule} and
\ref{alg:opt-schedule} using a basic implementation of the simplex algorithm
as their underlying linear program solver in Java. The tests were run on Intel
Core i5 1.7 GHz with 1GB memory available. The examples are based on the
model of an organisation with decoupled zones as in \cite{NBMJ11} and were
randomly generated with exactly the same parameters as described there.
We implemented also a simple lazy controller to compare its peak and
average energy consumption to our optimal one.
Simply asking the lazy controller to let the temperature oscillate around the minimum comfort 
temperature in each room  
is risky and causes high peak costs, so our ``lazy'' controller uses a different approach.
It switches any heater to its
minimum setting if its zone has reached a temperature in the top 5\% of its
allowable value range.
On the other hand, if the temperature in a zone is in the bottom 5\% of
its allowable value range, then the lazy controller finds and switches its
heater to the minimum setting that will prevent the temperature in that zone dropping
any further. However, before it does that, it first
checks whether there are any zones with their temperature above 10\% of their
allowable value range and switches them off first. This tries to minimise the number of heaters
being switched on at the same time and thus also tries to minimise the peak cost.

We have tested our systems for an organisation with eight zones and each heater
having six possible settings, which potentially gives $6^8 > 10^6$
possible modes. Zones parameters and their settings were generated using the
same distribution as described in \cite{NBMJ11} and the outside temperature was
set to 10$^\circ$C.
The simulation of the optimal
and the lazy controller was performed with a time step of three minutes and the duration of nine hours. 

First, in Figure \ref{fig:example-run} we can compare the difference in the behaviour
of the optimal controller as compared to the lazy one in the case of just two
zones.
In the case of the optimal controller, we can see that the temperature in each zone
stabilises around the lower safe bound by using a constant switching between various modes. 
On the other hand, for the lazy controller the
temperature oscillates between the lower and upper safe value, which wastes energy.
The peak cost was 15 kW for the optimal controller and 18.43 kW
for the lazy one, while the average energy usage was 13.4 kW and 15.7 kW,
respectively. This gives 23\% savings in the peak energy consumption and 17\% savings in the average energy consumption.
Note that any safe controller cannot use more than 16.9 kW of energy on the average, because otherwise it would
exceed the upper comfort temperature for one of the rooms, so the maximum possible savings in the average energy consumption 
cannot exceed 26\%. For a building with eight rooms, the running time of our algorithm
was between less than a second to up to a minute with an average 40 seconds,
depending on how many modes were necessary to ensure safe controllability of the system.
The lazy controller was found to have on the average 40\% higher peak cost than the optimal controller and 15\% higher average-cost.
In the extreme cases it had 70\% higher peak cost and 22\% higher average-cost. 
Again, the reason why the lazy controller did better in the average energy consumption than the peak consumption is
that the comfort zone is so narrow and any safe controller cannot waste too much energy
without violating the upper comfort temperature in one of the rooms.

 \section{Conclusions}
 \label{sec:conclusion}
 
We have proposed and analysed a subclass of hybrid automata %
with dynamics govern by linear differential equations and no guards on transitions. 
This model strictly generalises the models studied
by Nghiem et al. in \cite{NBMJ11} in the context of peak minimisation for
energy consumption in buildings. 
We gave a sufficient and necessary condition 
for the existence of a controller that keeps the state of the system within 
a given safe set at all times 
as well as
an algorithm that find such a controller in polynomial time.
We also analysed an extension of this model with costs per time unit associated with modes
and gave an algorithm 
that constructs a safe controller which minimises the peak cost, the average cost or any cost expressed
as a weighted sum of these two. 
Finally, we implemented some of these algorithms and showed how they perform in practice. 

From the practical point of view, the future work will involve turning the prototype implementation of the algorithms in this paper
into a tool. Our model can be extended by adding disturbances and interactions between zones to the dynamics of the model like in \cite{nghiem2012green}. This, however, would further complicate the already complicated formula given for the switching frequency of each mode of the
safe controller as defined in Algorithm \ref{alg:safe-schedule}. 
The special cases that could be looked at are the initial state being on the boundary of the safe set and
checking whether Theorem \ref{thm:safe-to-freq} also holds 
for all non-Zeno controllers not just for controllers with a positive minimum dwell time. 
An interesting problem left open 
is the decidability of finding a safe controller with the minimum dwell time
above a fixed constant.

\subsection*{Acknowledgments}
We would like to thank Rajeev Alur, Ashutosh Trivedi and Sasha Rubin for 
the discussions related to some aspects of this work. 
This research was supported by EPSRC grant {EP/G$050112$/$2$}.

\bibliographystyle{plain}
\bibliography{papers}

\newpage
\noindent{\Large \bf Appendix} \\
 \appendix
 
\section{Proof of Lemma \ref{lem:half-time}}

We first prove the following proposition.

\begin{proposition}
\label{prop:estimate}
If $\Hh$ is in the same mode $m$ during the time interval
$[t_0, t_0 + t]$ we have that  
$\bx_i(t_0) + t (b^m_i - a^m_i
\bx_i(t_0+t)) \leq \sgn(\bxdot_i(t_0)) \bx_i(t_0 + t) \leq \bx_i(t_0) +
t (b^m_i - a^m_i \bx(t_0))$ holds, where $\sgn$ is the signum function.  
\end{proposition}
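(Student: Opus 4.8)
The plan is to derive both inequalities from a single two-sided estimate on the increment $\bx_i(t_0+t)-\bx_i(t_0)$, obtained from the monotonicity (equivalently, convexity/concavity) of the trajectory inside a fixed mode, and then to rearrange it into the stated form. First I would recall the closed-form solution quoted after Definition~\ref{def:MMS}: while $\Hh$ stays in mode $m$, $\bx_i(t_0+\tau)=b^m_i/a^m_i+(\bx_i(t_0)-b^m_i/a^m_i)e^{-a^m_i\tau}$, and hence $\bxdot_i(t_0+\tau)=\bxdot_i(t_0)e^{-a^m_i\tau}$. Since $a^m_i>0$, this shows that $\bxdot_i$ keeps the constant sign $\sgn(\bxdot_i(t_0))$ on all of $[t_0,t_0+t]$ and that $\tau\mapsto\bxdot_i(t_0+\tau)$ is monotone there (equivalently, $\bx_i$ is monotone on this interval, as already observed in the paper).

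Next I would integrate: $\bx_i(t_0+t)-\bx_i(t_0)=\int_0^t\bxdot_i(t_0+\tau)\,d\tau$. Because the integrand is monotone in $\tau$, for every $\tau\in[0,t]$ it lies between its two endpoint values $\bxdot_i(t_0)=b^m_i-a^m_i\bx_i(t_0)$ and $\bxdot_i(t_0+t)=b^m_i-a^m_i\bx_i(t_0+t)$; therefore the increment lies between $t(b^m_i-a^m_i\bx_i(t_0))$ and $t(b^m_i-a^m_i\bx_i(t_0+t))$, with the ordering of these two bounds dictated precisely by $\sgn(\bxdot_i(t_0))$ (when $\bxdot_i(t_0)>0$ the integrand decreases, so $\bxdot_i(t_0+t)\le\bxdot_i(t_0)$; when $\bxdot_i(t_0)<0$ it is the other way round). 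Adding $\bx_i(t_0)$ and multiplying the resulting chain through by $\sgn(\bxdot_i(t_0))$ then yields exactly the claimed inequality; the degenerate case $\bxdot_i(t_0)=0$ is immediate, since then $\bx_i$ is constant on $[t_0,t_0+t]$ and $b^m_i-a^m_i\bx_i(t_0)=b^m_i-a^m_i\bx_i(t_0+t)=0$, so all three expressions collapse to $\bx_i(t_0)$. An equally short alternative is to invoke directly the fact that the graph of a convex (respectively concave) function lies on one fixed side of the tangent lines at the two endpoints of the interval: here $\bx_i(\cdot)$ is an affine image of $e^{-a^m_i\tau}$, hence convex when $\bxdot_i(t_0)<0$ and concave when $\bxdot_i(t_0)>0$, and the tangents at $t_0$ and at $t_0+t$ evaluated at the opposite endpoint give precisely the two bounds.

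I do not expect any genuine obstacle, as the content is a one-line mean-value estimate. The only point needing care is the bookkeeping of inequality directions — which of $\bxdot_i(t_0)$ and $\bxdot_i(t_0+t)$ is larger, and hence whether $\bx_i(t_0+t)$ lies above or below each of the two linear estimates — and this is exactly what the $\sgn(\bxdot_i(t_0))$ factor absorbs, so a short split into the three cases $\bxdot_i(t_0)>0$, $\bxdot_i(t_0)<0$, $\bxdot_i(t_0)=0$ closes the argument. This proposition is then the estimate needed, applied over a sub-interval on which $\bx_i$ stays near $\low_i$, to prove Lemma~\ref{lem:half-time}.
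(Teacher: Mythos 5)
Your proposal is correct and takes essentially the same route as the paper: both arguments rest on the observation that $\bxdot_i(t) = b^m_i - a^m_i\bx_i(t)$ is monotone on $[t_0,t_0+t]$ and hence sandwiched between its endpoint values, the paper converting this to the increment bound via the mean value theorem and you via integrating the derivative — a cosmetic difference. The sign bookkeeping and the degenerate case $\bxdot_i(t_0)=0$ are handled as the paper intends.
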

\begin{proof}
Recall that $\sgn(x) = 1$ if $x > 0$, $\sgn(x)=-1$ if $x < 0$, and $\sgn(x)=0$ if $x=0$. 
Notice that $\bxdot_i(t) = b_i - a_i \bx_i(t)$ attains its minimum and maximum at the ends of the time interval $[t_0,t_0+t]$, because $\bx_i$ is monotone in $t$.
Therefore, if $\bxdot_i(t_0) > 0$, i.e. $\sgn(\bxdot_i(t_0)) = 1$ and $\bx_i$ is increasing in $t$,
then for all $c \in [t_0, t_0+t]$ we have $\bxdot_i(c) \leq \bxdot_i(t_0)$ and $\bxdot_i(c) \geq \bxdot_i(t_0+t)$.
From the mean value theorem, we know that for some $c \in [t_0, t_0 +t ]$ we have
$\bx_i(t_0+t) = \bx_i(t_0) + t \bxdot_i(c)$; the inequality follows and we proceed similarly in the other cases. 
\end{proof}

\noindent{\bf Lemma \ref{lem:half-time}}.
{\em 
For at least half of the time duration of every timed action of $\sigma$ which uses mode~$\in M'$,  $\bx_i(t) \geq \low_i + \gamma$
 holds.
}
\begin{proof}
Recall that
$\amax := \max_{m\in M'} {|a^m_i|}$,
  $\cmin := \min_{m\in M'}{|b^m_i - a^m_i \low_i|}$ and $\tmin :=
  \tmin(\sigma)$.
  Of course $\cmin > 0$, because ${b^m_i}/{a^m_i} \neq \low_i$ for $m \in
  M'$ and $\tmin > 0$, because $\sigma$ is \feasible. Let
  $\gamma := \frac{\tmin\cmin}{2 + \tmin\amax}$, which is $>0$.
  
Let us consider the $k$-th timed action of $\sigma$ such that
$m_k \in M'$. Of course we have $t_k \geq \tmin$.
Notice that $\gamma < \frac{1}{2}\tmin\cmin$, because $\tmin\amax > 0$ and  
also, easy calculations show $\gamma < \frac{\cmin}{\amax}$.
If $\bx_i(T_{k-1}) \geq \low_i + \gamma$ and $\bx_i(T_{k}) \geq \low_i +
\gamma$ then we are done, because $\bx_i$ is monotonic in the time interval $[T_{k-1},
T_k]$ and so $\bx_i(t) \geq \low_i +
\gamma$ would hold for the whole $k$-th timed action.
We will estimate the longest amount of time the system can be in the same
mode while $\bx_i(t) \in [\low_i,\low_i+\gamma]$ holds. 
First, notice that for every $m \in M'$ we
either have $b^m_i - a^m_i x > 0$ for all $x \in [\low_i, \low_i + \gamma]$ or
it is $b^m_i - a^m_i x < 0$ for all $x \in [\low_i, \low_i + \gamma]$.
Otherwise, there would be $x \in [\low_i, \low_i + \gamma]$ such that $b^m_i
- a^m_i x = 0$ and so $b^m_i - a^m_i \low_i = a^m_i (x - \low_i) \leq \amax \gamma<\amax
\frac{\cmin}{\amax} = \cmin$; a contradiction with the definition of $\cmin$.
Because we just showed that the process cannot coverage to any point in the
interval $[\low_i, \low_i + \gamma]$, either it reaches
the lower or upper boundary of this interval
or it runs out of the allocated amount of time $t_k$.

Now, assume that $b^m_i - a^m_i x < 0$ holds in that interval, i.e. the value of
$\bx_i(t)$ is decreasing in $t$. The amount of time $\bx_i(t) \in [\low_i,\low_i+\gamma]$
holds is the greatest if the value of variable
$\bx_i$ starts at $\low_i + \gamma$ and ends at $\low_i$.
Using Proposition \ref{prop:estimate}, we can estimate this time to be at most
$\gamma/(a^m_i l - b^m_i) < \frac{1}{2}\tmin\cmin / \cmin =
\frac{1}{2}\tmin$, which means that during the remaining time equal to
$t^m_k - \frac{1}{2}\tmin > \frac{1}{2}\tmin$, the value of $\bx_i(t)$
stays above $\low_i + \gamma$.

Finally, if $b^m_i - a^m_i x > 0$ holds in that interval, then
we can again estimate using Proposition \ref{prop:estimate} the time $\bx_i(t) \in [\low_i,\low_i+\gamma]$ 
can hold to be at most 
$\gamma/(b^m_i-a^m_i(\low_i+\gamma)) < \gamma/(\cmin-a^m_i\gamma) =
1/(\frac{\cmin}{\gamma} - a^m_i) = 1/(\frac{2+\tmin\amax}{\tmin} - a^m_i) <
\frac{1}{2}\tmin$.
Therefore, again the amount of time the value of $\bx_i$
stays outside of the interval
$[\low_i,\low_i+\gamma]$ is greater than the amount of time spent inside of it. 
\qed
\end{proof}

\section{Proof of Lemma \ref{lem:critical}}
\noindent{\bf Lemma \ref{lem:critical}}.
{\em Either there is a variable which is critical for all \safe frequency vectors or
there is \asafe frequency vector in which no variable is critical.}
\begin{proof}
Recall that
$\FF_i(\vf, y) := \sum_{m
\in M} \vf(m) (b^m_i - a^m_i y)$ and
for a frequency vector $\vf$, variable $\bx_i$ is called critical if $\FF_i(\vf, \low_i) = 0$
or $\FF_i(\vf, \upp_i) = 0$ holds.
Finally, a frequency vector $\vf$ is {\em \safe} if
for every variable $\bx_i$ the following conditions hold (I) $\FF_i(\vf, \low)
\geq 0$, and (II) $\FF_i(\vf, \upp) \leq 0$.

Now, let us assume that there is no variable which is critical for all \safe
frequency vectors. 
If so, for each variable $\bx_i$ we can find \asafe frequency vector $\vf_i$
for which $\bx_i$ is not critical. But if we consider the frequency vector $\vf =
\frac{1}{|N|}\sum_{j} \vf_j$, then no variable can be critical in $\vf$,
because $\FF_i(\frac{1}{N}\sum_{j} \vf_j, \low_i) =
\frac{1}{N}\sum_j \FF_i(\vf_j, \low_i) \geq \frac{N-1}{N} \low_i +
\frac{1}{N} \FF_i(\vf_i, \low_i) > \low_i$ and also
$\FF_i(\frac{1}{N}\sum_{j} \vf_j, \upp_i) =
\frac{1}{N}\sum_j \FF_i(\vf_j, \upp_i) \leq \frac{N-1}{N} \upp_i +
\frac{1}{N} \FF_i(\vf_i, \upp_i) < \upp_i$, which also proves that
such defined frequency vector $\vf$ would be \safe.
\qed 
\end{proof}

\newcommand{\tone}{{\bf 1}\xspace}
\newcommand{\tmone}{{\bf -1}\xspace}
\newcommand{\tzero}{{\bf 0}\xspace}
\newcommand{\tsafe}{{\bf S}\xspace}

\section{Proof of Theorem \ref{thm:min-dwell-time-hardness}}
\noindent{\bf Theorem \ref{thm:min-dwell-time-hardness}}.
{\em For a given MMS $\Hh$, hyperrectangular safe set $S$ described by two points
$\low, \upp$, starting point $\px_0 \in S$, checking whether
there exits a $S$-safe controller with minimum dwell time $\geq 1$ is \pspace-hard.
}
\begin{proof}
As mentioned before, the proof is similar to the \pspace-hardness proof in \cite{ATW12b} of
the discrete-time reachability in constant-rate MMS, which reduces from 
the acceptance problem for linear bounded automata (LBAs), so we first formally define LBAs.

An LBA $\Aa$ is a tuple $(\Sigma, Q,
q_0, q_A, \delta)$, where $\Sigma$ is a finite alphabet, $Q$ is a finite set of states, $q_0 \in Q$ is the initial
state, and $q_A \in Q$ is a distinguished accepting state, and
$\delta \subseteq Q \times \Sigma \times Q \times \Sigma \times \set{-1, 0,
  +1}$ is the transition relation.
  We can assume the alphabet $\Sigma$ to be the binary alphabet $\{0,1\}$. 
Let us explain the interpretation of the elements of the transition relation.
Let $\tau = (q, a, q', b, D) \in \delta$ be a transition.
If LBA $\Aa$ is in state $q \in Q$ and its (read/write) tape head reads
character $a$, then it writes character $b$ at the current cell and moves its
head in the direction $D$ (left if $D = -1$, right if $D = +1$, and unchanged
in $D=0$), and it changes the state to $q'$.
Let $w \in \Sigma^L$ be an input word.
Without loss of generality we assume that the LBA uses exactly $L$ tape cells,
which hold the whole input word of size $L$ at the very beginning.
Hence configuration of the LBA can be written as $(q, p, b_0 b_1 \ldots
b_{L-1})$ where
$q$ is the current state, $p$ is the position of head such that
$0 \leq p < L$, and $b_0 b_1 \ldots b_{L-1}$ is the current contents of the tape.
Notice that such an LBA has only $|Q|\cdot L\cdot|\Sigma|^L$ different
configurations and so if LBA does not enter the accepting state $q_A$ for a given input word 
after that many
steps then it never will. 

We show a reduction from the acceptance problem for LBAs to the problem of the
existence of a safe controller with its minimum dwell time $\geq 1$ for linear-rate MMSs.
For a given LBA $\Aa$ and input word $w = b_0 b_1 \ldots b_{L-1}$, we define LBA $\Hh_\Aa = (M, N, A, B)$
where there is one variable $\var{q, p}$ for each state $q \in Q$ and
head position $0 \leq p < L$, one variable $\var{i}$ for each input cell $b_i$ where $0 \leq i < L$,
and one
variable $\var{p, \tau}$ for each head position ${0 \leq p < L}$ and direction
$\tau \in \set{-1, 0, +1}$.
The safety condition $S$ simply requires that the value of all these variables at
all time belong to the interval $[-1, 1]$. Recall that 
$T_k(\sigma) \rmdef \sum_{i=1}^k t_i$ is the total time
elapsed up to step $k$ of the \controller $\sigma$ of $\Hh_\Aa$. A configuration $(q', p', b_0 b_1
\ldots b_{L-1})$ of machine $\Aa$ at step $k$ is encoded in the variables of
$\Hh_\Aa$ in a way that $\var{q, p}(T_k(\sigma)) > 0$ iff $q = q'$ \&  $p = p'$
and we have $\var{q, p}(T_k(\sigma)) < 0$ otherwise; and also for all $0 \leq i < L$ we have
$\var{i}(T_k(\sigma)) > 0$ iff the input cell $b_i = 1$ and we have $\var{i}(T_k(\sigma)) < 0$ iff $b_i  =
0$.
There is also a special variable $\bx_A$ which deals with the case when the input word is accepted
and special variable $\bx_T$ which as the only variable has a different safety interval $[-1,-0.9]$.  

We will now construct a gadget used in our reduction.
We say that variable $\bx_{v}$ (where $v \in \{i: 0\leq i<L\}\cup\{(q,p): q\in Q, 0\leq p < L\}$) in mode $m$ is of
(i) type \tone if $b^m_v = 2$ and $a^m_v = 1$, 
(ii) type \tmone if $b^m_v = -2$ and $a^m_v = 1$,
(iii) type \tzero if $b^m_v = 0$ and $a^m_v = 1/(11\cdot |Q|\cdot L\cdot
|\Sigma|^L)$.
Assume that the current mode is $m$, time is $t_0$ and the value of variable $\bx_v$ is safe, i.e. $x := \var{v}(t_0) \in [-1,1]$. Notice that
if the type of this variable is \tone in mode $m$, then after time $t$ its value
becomes $2 + (x - 2) e^{-t}$ which belongs to the safe set if $t=1$ and $x \leq 2-e
\approx -0.718$, but for $t \geq 1.1 > \ln 3$ its value is never safe.
Similarly for type \tmone, the new value is safe for $t=1$ and 
$x \geq e-2 \approx 0.718$, but after time $t \geq 1.1 > \ln 3$ it is never safe.
Finally, for a variable $\bx_v$ of type \tzero, we can compute that the relative change in the value of this
variable after time $t \leq 1.1 \cdot |Q|\cdot L\cdot
|\Sigma|^L$ to be $|e^{-a^m_vt} x - x|/|x| = 1 - e^{-a^m_v t} \leq 1 - e^{-\frac{1}{10}}
< 0.1$, i.e. its value does not change by more than $10\%$. Moreover, a
constant switching between a mode of type \tone and \tmone 
for some variable while spending in each mode amount of time $t = 1$ results in a trajectory that converges
to $\frac{4e^{-1}-2e^{-2}-2}{1-e^{-2}} \approx -0.924$ on odds 
steps and 
$\approx 0.924$ on even steps
independently of the starting point. 
Therefore, assuming the initial value of a variable is either 1 or -1, only modes of type \tone or \tmone are used, and the system is safe at all time,
the closest this variable can get to value $0$ is after the first step of length $t=1$. 
That value is $2-3e^{-1} \approx 0.896$ for a variable that starts at $-1$ and $-0.896$ for a variable that starts at $1$.
If we now allow that variable to switch to type \tzero as
well, then the closest such a process can get to $0$ 
is to let the just computed value decay 
towards $0$ by using modes where it has type \tzero only.
Its absolute value after time $t \leq 1.1 \cdot |Q|\cdot L\cdot
|\Sigma|^L$ would be still $> 0.896*0.9 \approx 0.8$ which is $ > 0.718$.
Assuming the number of timed actions in controller $\sigma$ does not exceed
$|Q|\cdot L\cdot|\Sigma|^L$, the minimum dwell time of each action is $\geq 1$ and
each mode has at least one variable of a nonzero type then we have the following.
A variable can remain safe in two consecutive timed actions 
if and only if its type changes from \tone to \tzero or \tmone,
from \tmone to \tzero or \tone, and from \tzero to 
\tzero or to type -$\mathbf{d}$ where $\mathbf{d}$ was the last nonzero type this variable had before \tzero.
If we interpret the value of a variable above $0.718$ as $1$ and below
$-0.718$ as $0$, then we can look at each timed action in a mode of type \tone, \tmone,
and \tzero as adding 1, subtracting 1, or keeping the value of that binary value constant, respectively.

Now, each transition  $\tau = (q, a, q', b, D) \in \delta$ and head position $p$
is simulated using two modes $M_{p, \tau}$ and $M'_{p, \tau}$.
Mode $M_{p, \tau}$ checks whether the letter in the $p$-th cell is $a \in \Sigma = \{0,1\}$,
while the mode $M'_{p, \tau}$ changes the content of the $p$-th cell 
to $b \in \{0,1\}$, and moves the head to a new position.
The rates of various variables in these modes are set in such a manner that
a schedule is safe if and only if it respects the transition structure of
$\Aa$.
The main features of the construction are the following.
\begin{itemize}
\item
  In mode $M_{p, \tau}$ the type of variable $\var{p}$ is \tmone if
  $a = 1$ and is \tone otherwise; the type of all other variables 
  is \tzero.
  This mode checks whether the character at head position $p$ is $a$.
\item
  In mode $M'_{p, \tau}$ the type of variable $\var{p}$ is \tzero if
  $a \not = b$.
  If $a = b$ then the type of variable $\var{p}$ is \tone.
  If $D = -1$ ($D = +1$) then variable $\var{q, p}$ has type \tmone
  and $\var{q', p-1}$ has type \tone  ($\var{q', p+1}$ has type \tone).
  While if $D = 0$ then $\var{q, p}$ has type \tzero for all $q \in Q$ and $0
  \leq p < n$.
  The type of all other variables is \tzero.
\item
  To make sure that mode $M_{p, \tau}$ is immediately followed by
  mode $M'_{p, \tau}$ in every safe run, the type of the variable
  $\var{p, \tau}$ is \tone in mode $M_{p, \tau}$, and \tmone in mode
  $M'_{p, \tau}$, while it is of type \tzero in every other mode.
\item
  The special variable $\bx_T$ has type \tzero in every mode and safe set $[-1,-0.9]$, so
  the system is safe as long as the decay from its initial value $-1$ 
  is not greater than 10\%, which does not happen before $1.1 \cdot |Q|\cdot L\cdot
|\Sigma|^L$ amount of time has elapsed. This guarantees that MMS $\Hh_\Aa$ becomes unsafe
  once we cannot guarantee that it follows the transitions of LBA $\Aa$ exactly.
\item 
  For each head position $0 \leq p < L$ we have two special modes   
  $M_{A,p}$ and $M'_{A,p}$, which deals with the case when LBA $\Aa$ enters 
  the accepting state $q_A$.
  In $M_{A,p}$ variable $\bx_{q_A,p}$ has type \tmone,
  variable $\bx_A$ has type \tone, and all other variables $\bx_v$ have a special safe type \tsafe such that
  $b^{M_{A,p}}_v = -1$ and $a^{M_{A,p}}_v = 1$.
  On the other hand, in  
  $M'_{A,p}$ variable 
  $\bx_{q_A,p}$ has type \tone,
  $\bx_A$ has type \tmone,  
  and all other variables have type \tsafe.
  Notice that once the system enters mode
  $M_{A,p}$ it can keep switching between modes 
  $M_{A,p}$ and $M'_{A,p}$ forever while being safe. 
  This is because, as it was pointed out before when all timed actions have delay $t=1$,
  the values of variables $\bx_{q_A,p}$ and $\bx_A$ in the limit keep switching between 
  $\approx -0.924$ and $\approx 0.924$ and all other variables, 
  which have type \tsafe, will converge from above to $-1$;  which belongs to the safe set of all of them. 
\end{itemize}
Note that each of the constructed modes has at least one variable of a nonzero type.
Let the initial state $\px_0$ of
$\Hh_\Aa$
be such that $\var{i}(0) = 1$ if the $i$-th input character $b_i = 1$,
$\var{q_0,0}(0) = 1$ (i.e. the initial state of $\Aa$ is $(q_0,0)$),  
and for all other variables we have $\var{v}(0) = -1$.
Notice that if the LBA $\Aa$ accepts the input word then there exists a $S$-safe controller
in MMS $\Hh_\Aa$ from the initial state $\px_0$, which at some point enters 
mode $M_{A,p}$ for some head position $p$ and keep switching between $M_{A,p}$
and $M'_{A,p}$. 
This has to happen before the value of variable $\bx_T$ becomes too close to $0$ to violate its
safety condition; until that moment $\Hh_\Aa$ models
precisely the configurations of LBA $\Aa$ and its transitions.
On the other hand, there is a safe feasible controller only if
$M_{A,p}$ is entered at some point, because otherwise  
variable $\bx_T$ will violate its safety condition eventually. 
So if a safe feasible controller for $\Hh_\Aa$ with minimum dwell
time $\geq 1$ exists, 
then LBA $\Aa$ has to enter the accepting state $q_A$ within
its first $|Q|\cdot L\cdot|\Sigma|^L$ timed actions. This shows that $\Aa$ accepts the input iff
$\Hh_\Aa$ has a safe feasible controller with minimum dwell time $\geq 1$. 
\qed

\end{proof}

\section{Proof of Corollary \ref{cor:cor}}

\noindent{\bf Corollary \ref{cor:cor}}.
{\em 
  Deciding whether there exists \afeasible $S$-safe \controller for a given MMS $\Hh$
  with a mode order specification graph $G$, initial mode $m_0$, a hyperrectangular safe set
  $S$ given by two points $\low$ and $\upp$ and an initial point $\low < \px_0 < \upp$ can be done in 
  polynomial time.  
}
\begin{proof}
Recall that a controller 
$\seq{(m_1, t_1), (m_2, t_2), \ldots}$
respects the mode order specification graph $G$ with initial mode $m_0$, if for all $i \geq 1$ we have $(m_i,m_{i+1}) \in G$ and $m_1 = m_0$.
Notice that the system $\Hh$ under any \feasible $S$-safe \controller will eventually end up in one of the strongly connected components (SCC) of the graph $G$ reachable from $m_0$, because the controller is non-Zeno and each timed action takes only a finite amount of time.
Also note that the safe controller returned by
Algorithm \ref{alg:safe-schedule}
returns a periodic controller which cycles over
all the modes given to it in exactly the same order as 
they were passed. 
Therefore, we can make sure the controller returned satisfies the mode order specification $G$
by passing the mode sequence in a particular order.
For an SCC $C$ of $G$ consisting of modes $C = \seq{m'_1, m'_2, \ldots, m'_k}$ that sequence of modes, denoted by $\rho_C$, is as follows: it starts 
at $m'_1$, then follows any path of modes in $G$ to $m'_2$, $\ldots$, then any path of modes to $m'_k$, 
and finally any path of modes to $m'_1$; all these paths exist because $C$ is an SCC. 
The sequence of modes $\rho_C$ can repeat some modes, but it satisfies 
the mode order specification graph $G$, each mode of $C$ occurs at least once and no mode outside $C$ occurs along $\rho_C$.
It is quite easy to see that there is 
\afeasible $S$-safe \controller for an initial state in the interior of the safe set for the set of modes $C$ iff
there is one for the sequence of modes $\rho_C$.

Now, for each SCC $C$ of $G$ reachable from the initial mode $m_0$ we 
check using Algorithm \ref{alg:safe-schedule} whether there is  
\afeasible $S$-safe \controller for the mode sequence $\rho_C$ and initial point $\px_0$.
If there is no such SCC then there is no \feasible $S$-safe \controller
which respects the mode order specification from $\px_0$ either, because
while using such a controller the system $\Hh$ has to eventually repeat only modes from a single SCC of $G$.
On the other hand, if there is such an SCC $C$, then we construct
\afeasible $S$-safe \controller from $\px_0$ as follows. 
First, we find any path in $G$ from $m_0$ to the very first mode in the
mode sequence $\rho_C$. 
We create a finite timed actions sequence based on this path where the time delay of each mode is set to such a small value
that when starting at $\px_0$ the system will still remain within the safe set $S$ at the very end of it. 
Such a value always exists when the initial point of $\Hh$ is in the interior of the safe set. 
To be precise, it suffices to set it to
$\min_{i\in I} \left(
\frac{\min\{\px_0(i) - \low_i, \upp_i - \px_0(i)\}}{\max_m |b^m_i -
a^m_i \px_0(i)|} \right)$.
Let the point reached at the end of this finite timed sequence be $\px_1 \in S$ and, because the coordinates of that point are likely to be irrational, 
let $\px_l$ and $\px_u$ be any two points with rational coordinates such that $\low < \px_l \leq \px_1 \leq \px_u < \upp$ holds.
In other words, $\px_l$ and $\px_u$ are simply some polynomial size lower and upper bounds on the coordinates of the point $\px_1$.  
Notice that the \feasible controller for the mode sequence $\rho_C$ that
we found earlier may not be safe when the system starts at $\px_1$ instead of $\px_0$,
because the value of $s$ may need to be smaller for the system to remain safe.
The new value of 
$s$ 
should be the minimum of the value of $s$ for the mode sequence $\rho_C$ when the initial point is $\px_l$ and when it is $\px_u$.
Finally, once we combine the finite timed action sequence, which starts at the mode $m_0$ and $\px_0$, with the
\feasible \controller for the mode sequence $\rho_C$, which is safe for any initial point between $\px_l$ and $\px_u$, we will
get \afeasible $S$-safe \controller that respects the mode order specification graph $G$.
\qed 
\end{proof}

Notice that if one would like to extend the model and allow the system to remain in the same mode forever, 
instead of forcing it to constantly switch between modes, it suffices to
add in $G$ an edge from each mode to itself.

\newpage
\section{Simpler Algorithm for Finding an Optimal Controller}
\label{app:alg} 
\begin{algorithm}
\label{alg:opt-schedule-simpler}
\caption{Finds an optimal $S$-safe \feasible controller from a given $\px_0 \in
S$.} \KwIn{A priced MMS $\Hh$, two points $\low$ and $\upp$ that define a 
hyperrectangle $S = \{\px : \low \leq \px \leq \upp\}$ and an initial point
$\px_0 \in S$ such that $\low < \px_0 < \upp$, and constants $\cavg$ and
$\cpeak$ which define the cost of a controller.} \KwOut{NO if no $S$-safe
\feasible controller exists from $\px_0$, and an periodic such controller $\sigma$
for which $\cpeak \peakcost(\sigma) + \cavg \avgcost(\sigma)$ is
minimal, otherwise.
} 

Modify Algorithm
\ref{alg:safe-schedule} by adding
the objective function
{\em Minimise $\sum_{m\in M'}{\vf_m \pi(m)}$}
to the linear program at line 
\ref{alg-line:safe-LP2}. Let $\optavgcost(M')$ 
be the value of this objective when Algorithm \ref{alg:safe-schedule} is
called for the set of modes $M'$. 

Let $P = \{ \pi(m) : m \in M \}$ be the set of all different costs of modes of $\Hh$. 
(Notice that only these costs can be potential peak costs.) For a given $p$ let $M_{\leq
p}$ denote the set of modes with cost at most $p$. 

Iterate over $p \in P$ and find the one with the smallest value of $\cpeak p + \cavg \optavgcost(M_{\leq p})$ and denote
it by $p^*$.

{\bf return} the periodic controller returned by
the modified version of Algorithm \ref{alg:safe-schedule} called for the set of modes $M_{\leq p^*}$.
\end{algorithm}

\end{document}